\documentclass[12pt]{article}

\usepackage[short]{optidef}
\usepackage{amsmath}
\usepackage{tikz}
\usepackage{subfigure}
\usetikzlibrary{patterns}
\usetikzlibrary{decorations.pathreplacing}
\usepackage{amssymb}
\usepackage{amsthm}
\usepackage{graphicx}
\usepackage{array}
\usepackage{comment}
\usepackage{mathrsfs}
\usepackage{enumerate}
\usepackage[top=1.25in,bottom=1.25in,left=1.25in,right=1.25in]{geometry}
\usepackage[onehalfspacing]{setspace}
\usepackage[hidelinks]{hyperref}
\usepackage{bookmark}
\usepackage{appendix}
\usepackage{bm}

\usepackage[backend=biber, style=authoryear, citestyle=authoryear-comp, uniquename=false, dashed=false, maxcitenames=3, maxbibnames=10]{biblatex}
\renewbibmacro{in:}{\ifentrytype{article}{}{\printtext{\bibstring{in}\intitlepunct}}}
\addbibresource{references.bib}

\onehalfspacing

\DeclareMathOperator*{\argmax}{arg\,max}

\renewcommand{\epsilon}{\varepsilon}


\theoremstyle{plain}
\newtheorem{theorem}{Theorem}
\newtheorem*{corollary}{Corollary}
\newtheorem{lemma}{Lemma}
\newtheorem{claim}{Claim}
\newtheorem{proposition}{Proposition}

\theoremstyle{definition}
\newtheorem{definition}{Definition}

\theoremstyle{remark}

\newtheorem*{example}{Example}

\newcommand{\bfr}{\mathbf{R}}
\newcommand{\bbr}{\bfr}
\newcommand{\calM}{\mathcal{M}}
\newcommand{\td}{\mathrm{d}}
\newcommand{\one}{\mathbf{1}}
\newcommand{\fb}{\mathrm{fb}}
\newcommand{\ic}{\mathrm{IC}}
\newcommand{\ob}{*}

\newcommand{\scrl}{\mathscr{L}}

\newcommand{\uw}{u}
\newcommand{\cst}{c}


\title{Collective Upkeep}
\author{Erik Madsen\footnote{Department of Economics, New York University (Email: \texttt{emadsen@nyu.edu}).} \quad Eran Shmaya\footnote{Department of Economics, Stony Brook University (Email: \texttt{eran.shmaya@stonybrook.edu}).}}
\date{January 31, 2025}

\begin{document}
\maketitle

\begin{abstract}
We design mechanisms for maintaining public goods which require periodic in-kind contributions, motivated by incentives problems facing crowd-sourced recommender systems.  Utilitarian welfare is maximized by redistributive policies which are infeasible when group members can leave or misreport their preferences.  An optimal mechanism reduces contributions for group members with low benefit-cost ratios to encourage participation; and pairs reduced contributions with restricted access to the good to ensure truthful reporting.  At most two membership tiers are offered at the optimum, indicating that ecommerce and digital content platforms may benefit substantially from offering simple user-adjustable recommendation settings.
\\ \\
\noindent \textbf{JEL Classification}:  D71, D82, H41\\
\noindent \textbf{Keywords}: Public goods provision, in-kind contributions, crowdsourcing, recommender systems, steady-state mechanism design
\end{abstract}

\section{Introduction}
Social and economic groups commonly band together to provide public goods by pooling individual contributions.  A collective action problem arises whenever the group benefits of these contributions exceed their private benefits to the contributor.  In these situations, institutions are needed to coordinate contributions and ensure that group members internalize their collective benefits.

A large literature has studied the design of such institutions when group members make monetary contributions.  In some contexts, however, members may make ``in-kind'' or other nonmonetary contributions.  A leading example involves crowdsourced \emph{recommender systems} maintained by ecommerce and digital content platforms such as Netflix, Amazon, Spotify, and TikTok.  

Recommender systems aggregate user experiences with new products (which may be goods, services, or content, depending on the context) to make informed recommendations to later users.  
Experiences with untested products are often costly for the users involved, because the products may turn out to be of low quality or fit their tastes poorly. Further, some of their informational value accrues to other users, who benefit from improved recommendations.  These experiences are therefore effectively (nonmonetary) contributions to a public good, and a platform wishing to efficiently elicit them must solve a collective action problem.

Recent work in economics \parencite{crowdwisdom, recommenderCheHorner} and computer science \parencite{incentiveExploreKleinberg,bayesExploreImmorlica,bayesExploreMansour} has studied the design of recommender systems under constraints stemming from this collective action problem.  When users may disregard recommendations, platforms face dynamic constraints on the credibility of recommendations over the life cycle of an individual product.  These restrictions reflect incentive constraints in the underlying collective action problem which limit redistribution---in this case, from early to late adopters.


In this paper, we build on these insights to understand the \emph{aggregate} constraints platforms face on recommendations across many products.  Aggregate constraints are especially salient when users have little insight into the lifecycle of individual products.  For instance, Netflix subscribers may not know how long a particular movie has been on the platform or how many prior users have viewed or enjoyed it.  Instead, they may judge the credibility of the platform's recommendations primarily on their long-run quality.  
We study the design of mechanisms for providing such long-run public goods.  

We formalize our analysis through a stylized, tractable model of \emph{collective upkeep} (Section \ref{sec:model}).  In our model, a machine provides services to a large group of agents over a long time horizon.  The machine periodically breaks, reducing the value of its services until sufficient contributions have been collected to fix it.  All agents benefit from the improved services provided by a working machine but dislike contributing to fix it, with potentially heterogeneous preferences for both activities.  A designer seeking to maximize the group's long-run welfare crafts a mechanism, which regulates access to the machine's services and solicits contributions from group members. 


Under a first-best mechanism (Section~\ref{se:fb}), the designer asks all agents with sufficiently low contribution costs to contribute whenever the machine is broken.  When agents are homogeneous, individual and group preferences are aligned and the first-best mechanism is implementable.  
But when agents are heterogeneous, the first-best mechanism is redistributive and generally induces some agents to leave the group.  Participation constraints must therefore be incorporated into the mechanism design problem.

We characterize the bite of these constraints when the designer observes agents' preferences (Section~\ref{se:obedience}).  In that setting, the optimal mechanism tailors the contribution requirements of agents with binding participation constraints. 
When agent preferences satisfy a total ordering condition, optimal contributions exhibit a tripartite structure: High types contribute fully and low types contribute nothing, while intermediate types make partial contributions.  Notably, participation constraints bind only for intermediate types, and so an agent's payoff is non-monotone in his type.

In some settings, agents' preferences may be their private information and not directly observable by the designer (Section~\ref{se:ic}).  In that case, a mechanism must pair reduced contributions with restricted access to the machine in order to truthfully elicit preferences.  We show that the optimal screening mechanism offers at most two distinct participation levels.  In one, the agent enjoys full access to to the machine.  In the other, the agent enjoys only limited access while incurring a reduced responsibility for contributions.  Under this mechanism, some agents opt out of the group.

Our results can be applied to obtain new lessons for the design of recommender systems (Section~\ref{sec:fluidInterpretation}).  To this end, we show how to formally interpret our model as a large population of long-lived users who enjoy a flow of benefits from a recommender system, with breakage arising from a periodic need to learn about about new products as they are introduced and old products grow stale or are discontinued.  With this interpretation, our results point toward the desirability of user-customizable recommendation settings, which can be adjusted to obtain improved recommendations (for instance, better-personalized) at the cost of increased exposure to untested products.  Further, near-optimal performance may be achievable with only a small number of settings, as indicated by the optimality of two-choice menus in our formal model.  


Our results also provide a new perspective on existing studies of recommender system design (Section~\ref{sec:poissonInterpretation}).  To facilitate this comparison, we provide an alternative formal interpretation of our model involving a sequence of short-lived agents recommended either a safe default product or a risky alternative with renewing characteristics.  With this interpretation, our model can be viewed as a steady-state, heterogeneous-agent variant of \textcite{crowdwisdom} (hereafter KMP).  In KMP, obedience constraints limit the designer's ability to redistribute welfare from early-arriving to late-arriving agents.  Our results show that this phenomenon is pervasive: General forms of agent heterogeneity impose constraints on redistribution which bind when designing a welfare-maximizing mechanism.




Our exercise is similar in spirit to a recent literature studying mechanism design for private good provision under redistributive motives. See, for instance, \textcite{condorelli2013, akbarpour2023, paistrack} for one-sided settings; \textcite{dworczak2021redistribution} for a two-sided setting; and \textcite{kang2023} for a setting involving a mechanism run alongside an unregulated market.  In these settings, agents have differing, privately-observed social values of money, endowing a utilitarian designer with a preference for redistribution.  In our setting, agents' heterogeneous contribution costs induce an analogous motive in a public goods provision context.

\section{Model}\label{sec:model}

\subsection{Setting}
A group of agents enjoy the services provided by an excludable public good, which for concreteness we refer to as a \emph{machine}, in continuous time over a doubly infinite horizon.  At each moment the machine is either \emph{working} or \emph{broken}.  The machine provides services only when it is working, and a working machine breaks periodically. A broken machine must be fixed through contributions from group members.  Each agent can contribute only when the machine is broken and faces a limit on his rate of contribution, which without loss we normalize to 1.

Agents value the machine according to the frequency with which they use its services and contribute to fixing it.  Agents have heterogeneous preferences over these two outcome metrics, summarized by \emph{types} $\theta$ drawn from a probability measure $\lambda$ over a Borel space $\Theta$. If an agent of type $\theta$ uses the machine a fraction $r \in [0, 1]$ of the time and contributes a fraction $p \in [0, 1]$ of the time, he enjoys total utility
\begin{align} r \cdot u(\theta) - p \cdot c(\theta), \label{agentspayoff}\end{align}
where $u(\theta) > 0$ is the agent's usage benefit and $c(\theta) > 0$ is his contribution cost.  We assume that $u, c \in \mathscr{L}^1(\lambda).$  

\subsection{Design problem}\label{subsec:designProblem}

A designer oversees the machine's upkeep.  To do so, she collects contributions from agents and regulates their access to the machine's services.  Concretely, she chooses an \emph{allocation} $(R, P)$ consisting of long-run \emph{usage levels} $R: \Theta \rightarrow [0, 1]$ and \emph{contribution levels} $P : \Theta \rightarrow [0, 1]$ for all agents.  Her goal is to maximize utilitarian social welfare
\begin{equation}W(R, P) = \int \Big(R(\theta) \cdot u(\theta) - P(\theta) \cdot c(\theta)\Big) \, \lambda(\td\theta) \label{objective-full}.\end{equation}

The set of feasible allocations is limited by \emph{physical constraints} related to the process by which the machine breaks and is fixed.  We implement these constraints in a reduced-form way consistent with multiple interpretations.  Because these interpretations do not add any further formal content to our analysis, we defer describing them until Sections \ref{sec:fluidInterpretation} and \ref{sec:poissonInterpretation}.

Let $Q \in [0, 1]$ denote the long-run fraction of time the machine is working, which we refer to as the machine's \emph{uptime}.  
Then the physical constraints are given by
\begin{align}&\rho \cdot Q = \int P \, \td\lambda \label{eq:balanceConstraint}\tag{B} \\
&R(\theta) \leq Q, \quad P(\theta) \leq 1 - Q \label{eq:simplexConstraint}\tag{S-$\theta$}\end{align}


The \emph{balance condition}~\eqref{eq:balanceConstraint} says that the long-run rate at which the machine breaks (on the left-hand side) balances the long-run rate at which agents contribute to fixing it (on the right-hand side).  The parameter $\rho > 0$ can be interpreted as the machine's breakage rate.
The \emph{simplex bounds}~\eqref{eq:simplexConstraint} say that each agent's usage and contributions are bounded by the respective frequencies with which the machine is working and broken.

The set of feasible allocations may additionally be limited by \emph{economic constraints} related to agents' freedom of action.  These constraints encode familiar interim participation and incentive compatibility requirements.  Formally, the participation constraints are
\[R(\theta) \cdot u(\theta) - P(\theta) \cdot c(\theta) \geq 0. \tag{P-$\theta$} \label{eq:participationConstraint}\]
The left-hand side captures an agent's payoff from participating in the mechanism, while the lower bound of 0 is an agent's outside option obtained by never using or contributing to fix the machine. Meanwhile, the incentive-compatibility (IC) constraints are
\[R(\theta) \cdot u(\theta) - P(\theta) \cdot c(\theta) \geq R(\theta') \cdot u(\theta) - P(\theta') \cdot c(\theta). \tag{IC-$\theta\theta'$} \label{eq:ICConstraint}\]
That is, each agent can obtain any other agent's usage and contribution levels by misreporting his type.  Under incentive compatibility, no agent should prefer to do so.  

Given a feasible allocation $(R, P)$, the machine's uptime $Q$ is pinned down by the balance condition \eqref{eq:balanceConstraint}.  Nonetheless, as we will demonstrate later, it is technically convenient to include $Q$ as an additional design variable and enforce \eqref{eq:balanceConstraint} by imposing it as a constraint on the designer's welfare maximization problem.  We therefore study the designer's problem of choosing a \emph{mechanism} $(R, P, Q)$ to maximize utilitarian welfare \eqref{objective-full} subject to the physical constraints \eqref{eq:balanceConstraint} and \eqref{eq:simplexConstraint}, as well as potentially the economic constraints \eqref{eq:participationConstraint} and \eqref{eq:ICConstraint}.

\subsection{Discussion of assumptions}
We make several key modeling assumptions which merit further discussion.  

\paragraph{Utilitarian objective:} We assume that the designer is benevolent and wishes to maximize group welfare.  This objective is consistent with existing work on recommender systems in the economics literature, which similarly assumes a utilitarian objective \parencite{crowdwisdom, recommenderCheHorner}.  We additionally view this objective as appropriate for applications involving the design of recommender systems, because these systems are typically free add-ons to platforms used to sell products or content subscriptions.  To a first approximation, making recommendations maximally valuable to users facilitates a goal of maximizing product sales or subscription revenue.



Our techniques remain applicable under a more general objective in which the platform maximizes a weighted sum of group welfare and uptime or contributions.  In this more general setting, the designer's objectives conflict with group members' even when agents are homogeneous.  As a result, an additional friction impedes implementation of the designer's preferred outcome.

\paragraph{Lack of transfers:} We assume that the designer cannot subsidize agents with money to incentivize contributions.  This restriction is in line with existing work on recommender systems in both the economics and computer science literature.  It additionally accords with actual practice in most real-world platforms, where users are typically not subsidized by the platform to try recommended products.  


\paragraph{Binary state:} We assume that the machine has only two possible states.  This modeling choice, while stylized, allows us to derive microfounded constraints on feasible allocations which are tractable and plausible.  In particular, the resulting constraints respect two appealing qualitative principles: 1) the long-run average machine state is increasing in total contributions; and 2) a better long-run state relaxes constraints on usage while tightening constraints on contributions.  We therefore view a binary-state setting as useful for obtaining interpretable first-order lessons for the design of recommender systems.  

\subsection{Related models}\label{se:modelRelationships}

Our model bears important similarities to models of static public goods provision, information design, and steady-state mechanism design.  We now describe the precise relationship between these settings.  Both the similarities and differences are instructive for understanding our results and contribution to the literature, in particular the role played by our physical constraints.

\subsubsection{Public goods provision}\label{se:public}
In a static public goods provision problem, a designer must raise contributions simultaneously from a group of heterogeneous agents in order to produce a collective good.  To compare these settings and ours, we describe a benchmark static variant of our collective upkeep model. 

In the static benchmark, the group is a continuum of mass 1 with preferences described by a measure $\lambda$ over the type space $\Theta$. Let $P$ be the vector of contributions; $R$ be the vector of allocations of the public good; and $Q$ be the quantity of the public good, which is determined by total contributions through the linear production function \eqref{eq:balanceConstraint}. Agent payoffs are linear in allocations and contributions, as in \eqref{agentspayoff}. The public good is excludable, so that any $R$ such that $0\le R(\theta)\le Q$ is feasible.

Payoffs for all players are the same in the static benchmark and our model of collective upkeep.  Additionally, the constraints on feasible allocations in the former model are a subset of the ones in the latter.  In particular, the economic constraints on participation and incentive compatibility are identical in the two settings.  Furthermore, the balance condition and the simplex bound $R(\theta) \leq Q$ are required for feasibility in both settings.  The key difference between the settings is that the collective upkeep model requires the additional simplex bound $P(\theta) \leq 1 - Q.$

This extra constraint not only limits each agent's ability to contribute, but it imposes an \emph{endogenous} bound governed by overall provision of the public good.  In a static setting, exogenous contribution bounds could be justified through budget constraints or the contribution of a scarce non-monetary resource like labor. (For an example of a public goods problem in which exogenous contribution bounds are imposed, see \textcite{collectiveAction2024}.)  However, in static settings there is no natural microfoundation for contribution bounds which depend on the contributions of other agents. 
In our collective upkeep setting, by contrast, such bounds arise naturally.

The static benchmark can also be compared to the work of \textcite{hellwig2003} and \textcite{norman2004}, who design mechanisms for provisioning excludable public goods using monetary contributions.  Beside the differing nature of contributions, the main distinction between their setting and our static benchmark is that our agents are atomistic (i.e., infinitesimal).  As a result, the designer faces no aggregate uncertainty over group preferences.  Without a bound on contributions, a posted price (i.e., a fixed contribution in return for full access) is therefore optimal in our benchmark when agents' preferences are private.  This finding is consistent with the large-population limit of \textcite{norman2004}, where the optimal mechanism converges to a posted price.


\subsubsection{Information design}\label{se:information}
In a static information design problem, a sender attempts to persuade a receiver to take their preferred action by disclosing information about an underlying state. As a benchmark, consider the following problem, which is a closest approach to our collective upkeep model.  The receiver who can take one of two actions, ``Interact'' ($I$) or ``Avoid'' ($A$).  He receives payoff 0 from choosing $a = A$ and a state- and type-dependent payoff from choosing $a = I$, where the state space is $\Omega = \{W, B\}$ and he receives payoff $u(\theta)$ if $\omega = W$ and $-c(\theta)$ if $\omega = B.$  The sender's payoffs agree with the receiver's except when $(\omega, a) = (B, I)$, in which case the sender receives payoff $y - c(\theta)$ for some $y > 0.$  The state is $W$ with prior probability $Q.$ 

With observable agent types, this problem is similar to the binary-action information design problems studied in \textcite{bayesianPersuasion}.  With unobserved types, it is akin to models studied in \textcite{kolotilin2017persuasion} and \textcite{guo2019interval}.  Standard arguments from these settings imply that the binding incentive constraints involve obedience to a recommendation of $a = I$ and (if types are unobserved) truthful reporting. As a result, the incentive constraints can be reduced to the economic constraints appearing in the collective action model.  In particular, the participation constraint ensures obedience to recommendations of $a = I$, while the IC constraint ensures truthful type reporting.

The key difference between the two settings is that in the collective upkeep problem, the variables $Q$ and $y$ are endogenously determined by the balance condition.  In particular, the rate at which the receiver chooses $a = I$ when $\omega = B$ pins down $Q$ via the balance condition, and this linkage in turn determines the social value $y$ to the planner of interacting with a broken machine.  This feedback loop has no analog in static information design settings, where payoffs and state realizations are exogenous.

\subsubsection{Steady-state mechanism design}
Our setting shares important features with recent studies of long-run or steady-state mechanism design, especially in queuing \parencite{che2021optimal, margaria2023} and matching \parencite{baccara2020optimal}.  In these settings, agents view themselves as participating in a stationary system with an endogenous state whose distribution is the same for all agents.  This assumption contrasts with the prevalent classical framework in which the initial state is exogenous and agents' incentives change systematically with calendar time.  The stationary perspective plays a central role in our dynamic model interpretations. (See Sections \ref{sec:fluidInterpretation} and \ref{sec:poissonInterpretation} for details.)

\section{First-best mechanism}\label{se:fb}
We now solve for the designer's optimal mechanism satisfying the physical constraints, without imposing any economic constraints.  Formally, the designer chooses $(R, P, Q)$ to maximize aggregate welfare \eqref{objective-full} subject to the balance condition \eqref{eq:balanceConstraint} and simplex bounds \eqref{eq:simplexConstraint} for all $\theta \in \Theta.$  

\subsection{Characterizing the optimum}
An immediate observation is that welfare is increasing in agents' usage of the machine, while usage levels do not appear in the constraints.  As a result, at the optimum all agents should enjoy full access to the machine.  Formally, $R(\theta) = Q$ for all types at the optimum, saturating the corresponding simplex bounds.  Going forward, we enforce this choice of $R$ and drop it from the optimization problem.

We next establish that 
the first-best contribution levels exhibit a threshold structure: Low-cost members, whose contribution cost falls below a threshold $y^\fb$, contribute the maximum amount, while members with higher costs do not contribute at all.  Let $\bar{u} = \int \uw~\td\lambda$ be the group's aggregate usage benefit from a working machine.
\begin{proposition}\label{pr:fb}Let $y^\fb$ be the unique $y$ solving 
\begin{equation}\label{yfb}\bar u - \rho \cdot y = \int (y-\cst)^+~\td\lambda,\end{equation}
and let $W^\fb$ be the common value in~\eqref{yfb} evaluated at $y^{FB}$.  Then the mechanism $(P^\fb,Q^\fb)$ defined by
\begin{equation}\label{qfb}\rho \cdot \frac{Q^\fb}{1 - Q^\fb} = \text{Pr}_\lambda(\cst < y^\fb),\text{ and }\end{equation}
\[P^\fb(\theta)=(1-Q^\fb)\one_{[\cst < y^\fb]}(\theta)\]
solves the first-best problem and yields welfare $W^\fb$.
\end{proposition}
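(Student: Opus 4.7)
The plan is to solve the first-best problem by Lagrangian duality, attaching a single scalar multiplier $y$ to the balance condition~\eqref{eq:balanceConstraint} while keeping the simplex bounds as hard constraints. After substituting $R(\theta) = Q$ as noted in the paragraph preceding the proposition, the Lagrangian becomes
\[
\scrl(P, Q, y) = Q(\bar u - \rho y) + \int P(\theta)\bigl(y - c(\theta)\bigr)\,\lambda(\td\theta).
\]
For fixed $(Q, y)$, this is separable and pointwise linear in $P(\theta)$ subject to $P(\theta) \in [0, 1-Q]$, so it is maximized by $P_y(\theta) = (1-Q)\one_{[c < y]}(\theta)$, with optimized value $Q(\bar u - \rho y) + (1-Q)\int (y-c)^+\,\td\lambda$. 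This value is affine in $Q$ with slope equal to $(\bar u - \rho y) - \int (y - c)^+\,\td\lambda$, which vanishes precisely when $y = y^\fb$ by the defining equation~\eqref{yfb}.

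The first step is to verify that~\eqref{yfb} admits a unique solution. The map $y \mapsto \bar u - \rho y$ is continuous and strictly decreasing from $\bar u > 0$ (at $y = 0$) to $-\infty$, while $y \mapsto \int (y - c)^+\,\td\lambda$ is continuous and non-decreasing, equal to $0$ at $y = 0$ (since $c > 0$) and diverging as $y \to \infty$ (since $c \in \scrl^1(\lambda)$). The two graphs therefore cross at a unique $y^\fb > 0$, with the common value there being $W^\fb$. Feasibility of the proposed mechanism is then immediate: equation~\eqref{qfb} is precisely the balance condition evaluated at $P^\fb$, and the simplex bounds hold by construction.

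For optimality, I would apply weak duality. For any mechanism $(P, Q)$ satisfying~\eqref{eq:balanceConstraint} and~\eqref{eq:simplexConstraint}, the balance condition gives $W(P, Q) = \scrl(P, Q, y^\fb)$; pointwise maximization in $P$ together with the linearity in $Q$ gives
\[
\scrl(P, Q, y^\fb) \;\leq\; Q(\bar u - \rho y^\fb) + (1-Q)\int (y^\fb - c)^+\,\td\lambda \;=\; W^\fb,
\]
where the final equality uses that the slope in $Q$ vanishes at $y = y^\fb$, so the bound is independent of $Q$. Since equality is attained at $(P^\fb, Q^\fb)$, the proposed mechanism is optimal and its welfare is $W^\fb$.

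For the uniqueness clause, equality in the pointwise step forces $P(\theta) = (1-Q)$ on $\{c < y^\fb\}$ and $P(\theta) = 0$ on $\{c > y^\fb\}$, and non-atomicity of $c$ under $\lambda$ makes the residual set $\{c = y^\fb\}$ $\lambda$-null; hence $P$ is pinned down given $Q$, and the balance condition then forces $Q = Q^\fb$. I do not anticipate a genuine obstacle: the only delicate point is verifying that strong duality is tight without invoking heavy infinite-dimensional machinery, and this is handled cleanly by the explicit construction of $y^\fb$ that makes both sides of~\eqref{yfb} coincide with the primal value.
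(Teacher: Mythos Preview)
Your proposal is correct and follows essentially the same Lagrangian approach as the paper: dualize the balance condition, maximize the Lagrangian pointwise in $P$, and identify $y^\fb$ as the crossing point of the two sides of~\eqref{yfb}. The only minor difference is that the paper first relaxes the balance condition to an inequality and then invokes Sion's minimax theorem to obtain strong duality, whereas you verify tightness directly by exhibiting $y^\fb$ and checking that weak duality is attained at $(P^\fb,Q^\fb)$; your route is slightly more elementary but otherwise identical in substance.
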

\noindent (All proofs omitted from the body of the paper are collected in Section \ref{se:proofs}.)

To prove the optimality of a threshold contribution structure, we consider the Lagrangian relaxation of the problem with respect to the balance condition.  Assigning a Lagrange multiplier $y$ to that constraint yields the Lagrangian
\begin{equation}\label{lagrangefb}\mathscr{L}(P,Q; y) = Q \cdot \left(\bar{u} - \rho \cdot y\right) + \int P(\theta) \cdot (y - c(\theta)) \, \lambda(\td\theta)\end{equation}
Holding $(Q, y)$ fixed, the Lagrangian factorizes over types, allowing $P$ to be optimized pointwise. Note that this factorization relies on the inclusion of $Q$ as an auxiliary optimization variable, motivating its introduction. 

The optimal multiplier $y^\fb$ measures the social value of fixing the machine. Indeed, $y^\fb$ is the amount of money the designer would be willing to pay to fix the machine immediately when it breaks down. While this idea can be made formal using dynamic programming principles, such techniques are not easily generalized to incorporate economic constraints.  By contrast, the Lagrangian approach taken in our proof is readily adapted to these constraints, as we will demonstrate shortly.

\subsection{Comparative statics}
We now examine how the optimal mechanism is shaped by the breakage rate $\rho.$  Increasing $\rho$ reduces the machine's average lifespan and decreases the social value of fixing it.  Then since $y^\fb$ tracks this social value, it follows that:
\begin{corollary}\label{co:rhoyfb}
The first-best contribution threshold $y^\fb$ is decreasing in the breakage rate $\rho$. 
\end{corollary}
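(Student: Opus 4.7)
The plan is to derive the corollary by direct monotone comparative statics on the defining equation \eqref{yfb}. I would rearrange that equation into the form $F(y,\rho) = 0$, where
\[F(y, \rho) := \int (y - c(\theta))^+ \, \lambda(\td\theta) + \rho \cdot y - \bar u.\]
Because Proposition~\ref{pr:fb} already establishes that $y^\fb$ is the unique value of $y$ satisfying~\eqref{yfb}, it is also characterized as the unique root of $F(\cdot, \rho)$.

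Next I would verify two monotonicity properties. First, for each fixed $\rho > 0$, the map $y \mapsto F(y,\rho)$ is strictly increasing: the $\rho y$ term is strictly increasing in $y$, while $\int (y - c)^+ \, \td\lambda$ is nondecreasing in $y$. Second, for each fixed $y > 0$, the map $\rho \mapsto F(y, \rho)$ is strictly increasing. To apply the second property at $y^\fb$, I need to know that $y^\fb > 0$; this follows by evaluating \eqref{yfb} at $y = 0$, where the left-hand side equals $\bar u > 0$ and the right-hand side equals $0$, so the crossing of the two sides must occur at a strictly positive value.

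Combining these, if $\rho' > \rho$ then $F(y^\fb(\rho), \rho') > F(y^\fb(\rho), \rho) = 0$, and strict monotonicity of $F(\cdot, \rho')$ in $y$ forces the unique root $y^\fb(\rho')$ to lie strictly below $y^\fb(\rho)$. There is no substantive obstacle here: the result is essentially an immediate reading of \eqref{yfb}, and the only mild item to check is the positivity of $y^\fb$ so that perturbing $\rho$ strictly shifts the left-hand side of \eqref{yfb}. The same argument would also work through the implicit function theorem, differentiating \eqref{yfb} with respect to $\rho$ and observing that $\partial_y F > 0$ and $\partial_\rho F = y^\fb > 0$.
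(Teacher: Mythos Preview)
Your proposal is correct and is essentially the same argument as the paper's: both exploit that the left-hand side of \eqref{yfb} is decreasing in $y$ and shifts down with $\rho$, while the right-hand side is increasing in $y$, forcing the crossing point to move left. Your additional verification that $y^\fb>0$ (so that the shift in $\rho$ is strict) is a nice bit of extra care that the paper leaves implicit.
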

\begin{proof}The left-hand side of \eqref{yfb} is decreasing in $y$ while the right-hand side is increasing. Since an increase in $\rho$ induces a downward shift in the former, $y^\fb$ must decrease in $\rho.$ \end{proof}

A lower contribution threshold means that the machine is fixed more slowly when it breaks. As a result, machine uptime must decline with $\rho$:
\begin{corollary}\label{co:rhoqfb}Machine uptime $Q^\fb$ under the first-best mechanism is nonincreasing in the breakage rate $\rho$ and decreasing whenever $Q^\fb > 0$.\end{corollary}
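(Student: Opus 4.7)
The plan is to work directly from the characterization \eqref{qfb}, which can be rewritten as
\[\frac{Q^\fb}{1-Q^\fb}=\frac{\Pr_\lambda(\cst<y^\fb)}{\rho}.\]
Since the map $Q\mapsto Q/(1-Q)$ is strictly increasing on $[0,1)$, the behavior of $Q^\fb$ in $\rho$ is fully determined by the behavior of the right-hand side in $\rho$. I would therefore prove the claim by showing that the right-hand side is nonincreasing in $\rho$ everywhere, and strictly decreasing wherever $\Pr_\lambda(\cst<y^\fb)>0$.

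For the weak monotonicity, I would combine two observations: first, the denominator $\rho$ is strictly increasing in $\rho$; second, by Corollary \ref{co:rhoyfb} the numerator $\Pr_\lambda(\cst<y^\fb)$ is nonincreasing in $\rho$, since $y^\fb$ is nonincreasing in $\rho$ and $y\mapsto \Pr_\lambda(\cst<y)$ is nondecreasing in $y$. Their quotient is therefore nonincreasing in $\rho$, giving the first claim.

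For the strict version, suppose $\rho_1<\rho_2$ and let $y_i,Q_i$ denote the corresponding threshold and uptime. The chain
\[\frac{Q_2}{1-Q_2}=\frac{\Pr_\lambda(\cst<y_2)}{\rho_2}\le \frac{\Pr_\lambda(\cst<y_1)}{\rho_2}<\frac{\Pr_\lambda(\cst<y_1)}{\rho_1}=\frac{Q_1}{1-Q_1}\]
holds provided the strict inequality in the middle is justified, which requires only that the numerator $\Pr_\lambda(\cst<y_1)$ be strictly positive. But if $Q_1>0$, then \eqref{qfb} forces $\Pr_\lambda(\cst<y_1)>0$, so the strict inequality goes through and $Q_2<Q_1$ follows from strict monotonicity of $Q\mapsto Q/(1-Q)$.

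I do not expect a real obstacle here; the argument is a one-line comparative static once \eqref{qfb} and Corollary \ref{co:rhoyfb} are invoked. The only subtle point is being careful that Corollary \ref{co:rhoyfb} only delivers weak monotonicity of $y^\fb$ (because $y\mapsto\Pr_\lambda(\cst<y)$ can be locally constant), so the strict decrease of $Q^\fb$ must be obtained from the strict monotonicity of the denominator $\rho$ rather than from a strict decrease of the numerator.
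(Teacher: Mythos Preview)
Your argument is correct and follows essentially the same route as the paper: both combine \eqref{qfb} with Corollary~\ref{co:rhoyfb}, reading off monotonicity of $Q^\fb$ from the monotonicity of $\Pr_\lambda(c<y^\fb)/\rho$. Your version is in fact more carefully spelled out, particularly on the strict-decrease step; one small quibble is that Corollary~\ref{co:rhoyfb} actually gives \emph{strict} monotonicity of $y^\fb$ in $\rho$, so what you correctly identify as the subtle point is not weakness of that corollary but rather that the composition $y\mapsto \Pr_\lambda(c<y)$ may be locally flat.
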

\begin{proof}
Equation \eqref{qfb}, which characterizes $Q^\fb$,  implies that $Q^\fb$ is nonincreasing in $\rho$ and decreasing whenever it is positive, and is nondecreasing in $y^\fb$.  Since the previous corollary established that $y^\fb$ is decreasing in $\rho,$ it follows that $Q^\fb$ is as well.
\end{proof}

In spite of these two results, the long-run aggregate contribution level $\int P \, \td\lambda$ can increase in $\rho.$  Recall that the bound on each agent's contribution level is endogenously determined by the machine's downtime $1 - Q^\fb.$  Since downtime increases in $\rho,$ all agents asked to contribute do so more often.  This force can overwhelm the countervailing reduction in $y^\fb$. 
For instance, if $\Theta = \{\theta_0\}$ is a singleton, then \eqref{yfb} implies that $y^\fb = (u(\theta_0) + c(\theta_0))/(1 + \rho) > c(\theta_0)$ for sufficiently small $\rho$.  In that regime, the aggregate contribution level is $1-Q^\fb$, which is increasing in $\rho$ in light of the previous corollary.

\section{Mechanisms with participation constraints}\label{se:obedience}
When agents are homogeneous, the first-best mechanism of a utilitarian designer delivers all agents a non-negative payoff and therefore satisfies participation.  However, when agents have heterogeneous preferences, the first-best mechanism may violate the participation constraints of some agents.  We now solve for an optimal mechanism under the additional economic constraint that all agents must prefer the mechanism to their outside option.  Formally, we impose the participation constraints \eqref{eq:participationConstraint} for all $\theta \in \Theta.$  For the time being, we continue to assume that the designer observes agents' types.

\subsection{Characterizing the optimum}

As in the first-best solution, there is no reason to restrict any agent's access to a working machine.  Indeed, such restrictions would lower social welfare while simultaneously tightening participation constraints. Hence $R(\theta) = Q$ for all types at the optimum.  Going forward, we enforce this choice of $R$ and drop it from the optimization problem.

In this reduced problem, the participation constraint may be rearranged to obtain an upper bound on each agent's contribution level:
\[P(\theta) \leq Q \cdot \nu(\theta),\]
where \[\nu(\theta) = \uw(\theta)/\cst(\theta)\] 
captures an agent's rate of substitution between usage and contributions.  We will refer to this quantity as the \emph{valuation} of an agent of type $\theta.$    This representation allows the two constraints \eqref{eq:simplexConstraint} and \eqref{eq:participationConstraint} to be combined into the single constraint \begin{align}
P(\theta) \leq \min(1 - Q, Q \cdot \nu(\theta)) \tag{SP-$\theta$} \label{eq:participationConstraintModified}
\end{align}

Since participation constraints merely tighten the upper bound on feasible contribution levels, a threshold contribution structure is optimal for the same reasons as in the first-best problem.  The new feature of the problem with participation constraints is that agents who are asked to contribute may do so at different levels.  Since the constraints \eqref{eq:participationConstraintModified} are type-dependent, types for whom $\nu(\theta) < (1 - Q)/Q$ contribute a type-dependent amount less than their feasible maximum.  These are precisely the agent types for whom participation constraints bind.

Let
\begin{equation}\label{elldef}\ell(Q; y) = Q \cdot (\bar u - \rho \cdot y)+\int \min(1-Q,Q \cdot \nu(\theta))(y-\cst(\theta))^+~\lambda(\td\theta)\end{equation}
be the designer's reduced Lagrangian under the optimal threshold contribution rule, where recall that $y$ is the Lagrange multiplier on the balance condition.  The following theorem establishes that an optimal mechanism under participation constraints can be obtained by solving the minimax problem associated with $\ell$.
\begin{theorem}\label{th:obedience}
Let $W^\ob$ be the value and $(Q^\ob, y^\ob)$ be the smallest\footnote{The selection of the smallest saddle point is not needed for optimality, but we adopt it as a convention for comparability with the mechanism characterized in Proposition \ref{pr:fb}.} saddle point of the concave-convex minimax problem
\[\min_{y \geq 0}\max_{Q \in [0, 1]} \ell(Q; y)=\max_{Q \in [0, 1]}\min_{y \geq 0} \ell(Q; y),\]
with $y^\ob$ defined as $+\infty$ if the minimum is not attained. 
Then the mechanism $(P^\ob, Q^\ob)$ with
\[P^\ob(\theta)=\min(1-Q^\ob, Q^\ob \cdot \nu(\theta))\one_{[\cst < y^\ob]}(\theta)\]
solves the designer's problem under participation constraints and delivers welfare $W^\ob.$
Moreover, $y^\ob \geq y^\fb$, and this inequality is strict if $W^\ob < W^{\fb}$.
\end{theorem}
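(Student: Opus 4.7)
The plan is to form a Lagrangian for the balance condition, eliminate $P$ by pointwise maximization, apply minimax duality to produce a saddle, and finally compare the dual price $y^\ob$ to $y^\fb$ using the explicit behavior of $\ell$ at $Q=1$.

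\textbf{Step 1: Lagrangian reduction.} Following the Proposition~\ref{pr:fb} argument, I fix $R(\theta)=Q$ and attach a multiplier $y\ge 0$ to \eqref{eq:balanceConstraint}, yielding
$\mathscr L(P,Q;y)=Q(\bar u-\rho y)+\int P(\theta)(y-c(\theta))\,\lambda(\td\theta)$.
Since $\mathscr L$ is separable across types and $P(\theta)$ is constrained to $[0,\min(1-Q,Q\nu(\theta))]$ by the combined bound \eqref{eq:participationConstraintModified}, the pointwise maximizer is $P^{*}(\theta)=\min(1-Q,Q\nu(\theta))\one_{c(\theta)<y}$ (with free choice when $c(\theta)=y$). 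Substituting back reproduces $\ell(Q;y)$ in \eqref{elldef}.

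\textbf{Step 2: Saddle and primal optimality.} The function $\ell$ is jointly continuous, concave in $Q\in[0,1]$ (because $Q\mapsto\min(1-Q,Q\nu(\theta))$ is concave and is weighted by the nonnegative $(y-c(\theta))^+$), and convex in $y\ge 0$ (an integral of convex functions plus an affine term). Sion's minimax theorem then gives equality of the two iterated optima; the outer maximum is attained by compactness of $[0,1]$, while the outer minimum is attained whenever $y\mapsto\max_Q\ell(Q;y)$ is coercive and otherwise one sets $y^\ob=+\infty$ by convention. Weak duality yields $W(P,Q)=\mathscr L(P,Q;y)\le\max_{Q'}\ell(Q';y)$ for every feasible $(P,Q)$ and every $y$, so $\sup W\le W^\ob$. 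At a saddle, the subdifferential condition $0\in\partial_y\ell(Q^\ob;y^\ob)$ reads $\rho Q^\ob=\int\min(1-Q^\ob,Q^\ob\nu(\theta))\one_{c<y^\ob}\,d\lambda=\int P^\ob\,d\lambda$, so $(P^\ob,Q^\ob)$ is feasible; the identity $\ell(Q^\ob;y^\ob)=W(P^\ob,Q^\ob)+y^\ob\bigl(\int P^\ob\,d\lambda-\rho Q^\ob\bigr)$ then collapses to $W(P^\ob,Q^\ob)=W^\ob$, closing the duality gap.

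\textbf{Step 3: The bound $y^\ob\ge y^\fb$.} The key observation is that $\min(1-Q,Q\nu(\theta))$ vanishes at $Q=1$, so $\ell(1;y)=\bar u-\rho y$, a strictly decreasing function of $y$. Equation \eqref{yfb} states $\bar u-\rho y^\fb=W^\fb$, so for every $y<y^\fb$,
\[\max_Q\ell(Q;y)\ge\ell(1;y)=\bar u-\rho y>W^\fb\ge W^\ob,\]
ruling out $y$ as a minimizer and yielding $y^\ob\ge y^\fb$. For strictness, the identity $Q+\min(1-Q,Q\nu(\theta))=\min(1,Q(1+\nu(\theta)))$ together with \eqref{yfb} gives $\ell(Q;y^\fb)=\int\min(1,Q(1+\nu(\theta)))(y^\fb-c(\theta))^+\,\lambda(\td\theta)$, whose supremum over $Q\in[0,1]$ equals $W^\fb$ and is attained at $Q=1$. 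Hence if $W^\ob<W^\fb$, then $\max_Q\ell(Q;y^\fb)=W^\fb>W^\ob$ rules out $y^\fb$ itself, giving $y^\ob>y^\fb$.

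\textbf{Main obstacle.} The subtlest step is the minimax/saddle-existence argument, because the dual domain $[0,\infty)$ is not compact: one must handle the coercive and noncoercive cases separately, and in the latter verify that the limiting mechanism obtained as $y^\ob\to\infty$ (with $\one_{c<y^\ob}\equiv 1$) still satisfies balance and attains $W^\ob$. Once the saddle is in hand, the comparison $y^\ob\ge y^\fb$ follows cleanly from the particularly simple form of $\ell(1;\cdot)$.
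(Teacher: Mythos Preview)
Your approach is essentially the paper's: Lagrangian relaxation of the balance condition, pointwise elimination of $P$, Sion's minimax on the reduced Lagrangian $\ell$, and comparison of the dual function at and below $y^\fb$. Your Step~3 is the same observation the paper makes---that $\varphi^\ob(y)=\max_Q\ell(Q;y)$ coincides with $\bar u-\rho y$ on $[0,y^\fb]$---but you derive it more directly via $\ell(1;y)=\bar u-\rho y$, and your identity $Q+\min(1-Q,Q\nu)=\min(1,Q(1+\nu))$ is a clean way to compute $\varphi^\ob(y^\fb)=W^\fb$ for the strictness claim.

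The one place where the paper does more is exactly what you flag as the main obstacle: the case $y^\ob=+\infty$. Rather than a limiting argument, the paper argues by contraposition using Slater's condition: if there existed any $Q$ with $\rho Q<\int\min(1-Q,Q\nu)\,\td\lambda$, the dual optimum would be attained; hence when it is not, every $Q$ satisfies $\rho Q\ge\int\min(1-Q,Q\nu)\,\td\lambda$, forcing any feasible mechanism (balance as equality) to have $P(\theta)=\min(1-Q,Q\nu(\theta))$ for all $\theta$. This pins down $P^\ob$ without tracking limits. Your limiting sketch would also work, but you would need to argue that at the primal optimizer $Q^\ob$ the map $y\mapsto\ell(Q^\ob;y)$ is nonincreasing (else the inner minimum is attained at a finite $y$, contradicting $y^\ob=+\infty$), which forces $\int\min(1-Q^\ob,Q^\ob\nu)\,\td\lambda\le\rho Q^\ob$; combined with the reverse inequality from feasibility of some mechanism at $Q^\ob$, you recover balance for $P^\ob\equiv\min(1-Q^\ob,Q^\ob\nu)$.
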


The cost threshold $y^\ob$ is the optimized value of the Lagrange multiplier on the balance condition.  As a result, it has a natural interpretation as the (marginal) shadow value of fixing the machine.  When participation constraints bind, this value is larger than under the first-best mechanism.  Indeed, fixing the machine not only generates usage benefits for other agents (as in the unconstrained problem) but additionally loosens their participation constraints by raising the machine's uptime.  As a result, the designer can concentrate contributions on a narrower set of low-cost agents.

Surprisingly, this latter force can be powerful enough to induce more frequent repairs, and therefore higher uptime, under participation constraints than without them. (Of course, this outcome cannot improve social welfare over the first-best outcome.)  The following example illustrates this possibility.

\begin{example}
Consider a three-type environment $\Theta = \{L, M, H\}$ with
\begin{center}
\begin{tabular}{c|ccc}
& L & M & H \\
\hline
$u$ & 3 & 4 & 10 \\
$c$ & 3 & 2 & 1.25 \\
$\lambda$ & 1/3 & 1/3 & 1/3
\end{tabular}
\end{center}
The machine breaks at rate $\rho = 5.5.$  In this setting, $y^{FB} = 2.7$ and $Q^{FB} = 4/15,$ while $y^\ob = 45/14$ and $Q^\ob = 2/7 > Q^{FB}.$
\end{example}

\subsection{Ordered types}
In general, the set of participation-constrained agents are those with low contribution costs and valuations.  When costs and benefits comove, this set takes a particularly simple form.
\begin{definition}
Agent types are \emph{ordered} if $\Theta = \bfr_-$ and $\nu$ is increasing in $\theta$ while $c(\theta) = -\theta$.
\end{definition}
The substantive content of ordered types is that the type structure is one-dimensional and valuations are increasing while costs are decreasing along this dimension.  The identification of $\theta$ with the agent's cost is simply a convenient normalization.  It is without loss, since types can always be relabeled to achieve this identity.

Recall that the first-best mechanism requires agents with sufficiently low costs to contribute. For ordered types $c$ is decreasing, and therefore the first-best mechanism can be characterized by a type threshold $\theta^\fb = -y^\fb$ above which agents are asked to contribute.  
Meanwhile, because $\nu$ is increasing under ordered types, participation constraints are active on an interval of intermediate types.  Indeed, the designer's incentives are aligned with both high-type agents (who don't mind contributing) and low-type ones (for whom contributing is too costly to be socially valuable).  A conflict arises only for intermediate-type agents, who generate a large social value from contributing but do not sufficiently enjoy using the machine.  

The following proposition formalizes this reasoning.  It additionally establishes that the interval on which participation constraints bind includes the first-best contribution threshold.  

\begin{proposition}\label{prop:orderedTypesParticipation}
Suppose that types are ordered and first-best welfare is not achievable under participation constraints.  Then there exists an optimal mechanism in which participation constraints are active on an interval of types containing $\theta^\fb.$
\end{proposition}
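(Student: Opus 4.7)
My plan is to extract the structure of the active set from Theorem~\ref{th:obedience} and use the monotonicity of $c$ and $\nu$ under ordered types.

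From Theorem~\ref{th:obedience}, a type $\theta$ has a binding participation constraint exactly when $P^\ob(\theta)>0$ and $Q^\ob\cdot u(\theta)=P^\ob(\theta)\cdot c(\theta)$; reading this off the closed form $P^\ob(\theta)=\min(1-Q^\ob,Q^\ob\cdot\nu(\theta))\one_{[c<y^\ob]}(\theta)$ reduces the condition to the two inequalities $c(\theta)<y^\ob$ and $\nu(\theta)\leq(1-Q^\ob)/Q^\ob$. Under ordered types, $c(\theta)=-\theta$ is strictly decreasing in $\theta$, so the first inequality defines the upper ray $\{\theta>-y^\ob\}$; and since $\nu$ is increasing in $\theta$, the second defines a lower set $\{\theta\leq\bar\theta\}$, where $\bar\theta:=\sup\{\theta:\nu(\theta)\leq(1-Q^\ob)/Q^\ob\}$. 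The set of active types therefore sits inside the interval $I:=(-y^\ob,\bar\theta]$, which already delivers the interval structure claimed in the proposition.

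To complete the proof I verify $\theta^\fb\in I$. The lower endpoint $-y^\ob<\theta^\fb$ is immediate, since Theorem~\ref{th:obedience} gives $y^\ob>y^\fb$ whenever first-best welfare is not achievable and hence $-y^\ob<-y^\fb=\theta^\fb$. The upper endpoint requires $\nu(\theta^\fb)\leq(1-Q^\ob)/Q^\ob$, equivalently, that $\theta^\fb$ would earn nonpositive payoff if asked to contribute at the full simplex rate under OB. I plan to prove this by contradiction: if instead $\nu(\theta^\fb)>(1-Q^\ob)/Q^\ob$, then by monotonicity of $\nu$ the whole neighborhood $(\bar\theta,\theta^\fb]$ consists of strictly saturated contributors under OB with costs in $(y^\fb,-\bar\theta)$, who lie \emph{outside} the first-best contribution set. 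Combining the saddle-point identity $\partial_Q\ell(Q^\ob;y^\ob)=0$ with~\eqref{yfb} and the FB-infeasibility hypothesis, which under ordered types ensures that some $\theta$ just above $\theta^\fb$ satisfies $\nu(\theta)<(1-Q^\fb)/Q^\fb$, I expect to produce a signed integral identity that cannot be satisfied unless $\theta^\fb$ lies weakly below $\bar\theta$.

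The principal obstacle is this upper-endpoint argument. Because $Q^\ob$ can exceed $Q^\fb$ (as the preceding three-type example shows), participation violations at the FB uptime do not transfer to the OB uptime by a naive monotonicity argument in $(1-Q)/Q$. Overcoming this requires coupling $Q^\ob$ and $y^\ob$ through the saddle-point equations: I plan to subtract~\eqref{yfb} from the $\partial_Q\ell=0$ condition so as to balance $\rho(y^\ob-y^\fb)$ against the difference of integrated shadow values $(y^\ob-c)$ over the putative saturated and active regions, then use the bound $\nu\leq(1-Q^\ob)/Q^\ob$ on the active set to force the desired inclusion of $\theta^\fb$ in $I$.
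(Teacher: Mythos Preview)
Your derivation of the interval structure and the lower-endpoint inequality $-y^\ob<\theta^\fb$ matches the paper exactly. The gap is entirely in the upper-endpoint step, where your plan to subtract first-order conditions and balance shadow-value integrals is both incomplete (you write ``I expect to produce'' rather than producing) and unnecessarily hard; nothing in the proposal actually pins down the sign of the resulting expression, and the coupling you describe between $Q^\ob$ and $y^\ob$ is delicate precisely because, as you note, $Q^\ob$ can exceed $Q^\fb$.

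The paper dispatches the upper endpoint with a short convex-optimization observation that bypasses all of this. In any convex program, removing a constraint that is slack at the optimum leaves the optimal value unchanged. Now, the first-best mechanism automatically satisfies the participation constraints of every type \emph{outside} $\Theta^\fb=(\theta^\fb,0)$, since those types do not contribute under FB and hence earn $Q^\fb\cdot u(\theta)\ge 0$. So if you drop the participation constraints for types in $\Theta^\fb$, the FB mechanism becomes feasible and the relaxed problem has value $W^\fb$. If, contrary to claim, $\Theta^\fb\cap\Theta^\ast=\emptyset$---i.e., every FB-contributing type has a slack participation constraint at the OB optimum---then all the dropped constraints were inactive at $(P^\ob,Q^\ob)$, so the relaxed value is still $W^\ob$. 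This forces $W^\ob=W^\fb$, contradicting the hypothesis. Hence some $\theta'\in(\theta^\fb,0)$ lies in the active set, in particular in the lower set $I_l=\{\theta:\nu(\theta)\le(1-Q^\ob)/Q^\ob\}$. Since $I_l$ is a left half-line and $\theta^\fb<\theta'$, you get $\theta^\fb\in I_l$ immediately.

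This argument never compares $Q^\ob$ with $Q^\fb$ and never touches the saddle-point first-order conditions; the only ingredients are the hypothesis $W^\ob<W^\fb$, the trivial fact that non-contributors satisfy participation, and the monotonicity of $\nu$.
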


The fact that participation constraints bind only for intermediate types is a noteworthy feature of the optimal mechanism.  It is known that these may be the binding constraints in screening and trading mechanism design problems when outside options are type-dependent \parencite{jullienOO, cramton1987dissolving}.  By contrast, in our setting the result arises even though outside options are type-independent.


\subsection{Comparative statics}
Unlike the first-best solution, the optimal threshold $y^\ob$ under participation constraints does not exhibit unambiguous comparative statics in the breakage rate $\rho$.  There are two contrasting forces here. On the one hand, as in the first-best problem, frequent breakdowns reduce the usage benefits generated by fixing the machine.  On the other hand, participation constraints tighten as breakdowns become frequent, and fixing the machine more often loosens these constraints.  Fixing the machine therefore allows the designer to concentrate contributions on lower-cost agents, yielding an aggregate cost savings.

The tension between these forces is illustrated by the limiting behavior of $y^\ob$ as $\rho$ grows large.  Depending on the thickness of the right tail of the valuation distribution, contributions may become narrowly concentrated on low-cost types (i.e., the first force dominates) or else dispersed throughout the population of agents (the second force dominates).

\begin{proposition}\label{prop:compStatYP}
Suppose that $\mathbb{E}_\lambda[\nu] = \infty$ and the distribution of $c$ under $\lambda$ has full support on $\mathbb{R}_+.$

\begin{itemize}
\item If $\lim_{v \rightarrow \infty} v \Pr_\lambda(\nu > v) = 0$, then $\lim_{\rho \rightarrow \infty} y^\ob = \infty.$

\item If $\lim_{v \rightarrow \infty} v \Pr_\lambda(\nu > v) = \infty$ and $\nu$ and $c$ are negatively associated under $\lambda$, then $\lim_{\rho \rightarrow \infty} y^\ob = 0.$
\end{itemize}
\end{proposition}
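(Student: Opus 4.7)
The approach will be to extract, from the first-order conditions at the (interior) saddle point $(Q^\ob, y^\ob)$, a single identity governing $y^\ob$ and $v^\ob = (1-Q^\ob)/Q^\ob$, and then use it to rule out the complementary behaviors by contradiction. The FOC in $y$ (balance) reads $\rho = \int \min(v^\ob, \nu)\,\one_{[\cst<y^\ob]}\,\td\lambda$, and the FOC in $Q$ gives
\[\rho y^\ob - \bar u = \int \nu(y^\ob-\cst)^+\one_{[\nu<v^\ob]}\,\td\lambda - \int (y^\ob-\cst)^+\one_{[\nu\geq v^\ob]}\,\td\lambda.\]
Using $u = \cst\nu$ to rewrite $\nu(y^\ob-\cst)^+ = (y^\ob\nu - u)\one_{[\cst<y^\ob]}$ and substituting $\int\nu\,\one_{[\cst<y^\ob,\nu<v^\ob]}\,\td\lambda = \rho - v^\ob\Pr(\cst<y^\ob,\nu\geq v^\ob)$ from balance, these two conditions combine into the identity
\[\bar u = y^\ob v^\ob \Pr(\cst<y^\ob,\,\nu\geq v^\ob) + \mathbb{E}\!\left[u\,\one_{[\cst<y^\ob,\,\nu<v^\ob]}\right] + \int (y^\ob-\cst)^+\,\one_{[\nu\geq v^\ob]}\,\td\lambda.\]

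For the thin-tail case, suppose by way of contradiction that $y^\ob \leq Y$ along a subsequence with $\rho_n \to \infty$. Balance forces $v^\ob \to \infty$: if $\mathbb{E}[\nu\,\one_{[\cst<Y]}] < \infty$ the growth of $\rho$ cannot be accommodated while $y^\ob\leq Y$, giving the conclusion immediately; otherwise $v^\ob$ must grow. Under $v\Pr(\nu>v)\to 0$ the first term of the identity is at most $Y\cdot v^\ob\Pr(\nu\geq v^\ob)\to 0$, and the third term is at most $Y\Pr(\nu\geq v^\ob)\to 0$. Passing to a further subsequence with $y^\ob \to y^*\leq Y$ at a continuity point of $\cst$'s distribution, dominated convergence with envelope $u\in\mathscr{L}^1$ drives the middle term to $\mathbb{E}[u\,\one_{[\cst\leq y^*]}]$. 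Since $u>0$ a.s.\ and $\cst$ has full support on $\mathbb{R}_+$, this limit is strictly less than $\bar u$, contradicting the identity.

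For the thick-tail case, suppose that $y^\ob \geq \delta > 0$ along a subsequence with $\rho_n \to \infty$. Balance forces $v^\ob \geq \rho/\Pr(\cst<y^\ob) \geq \rho \to \infty$, and negative association gives $\Pr(\cst<y^\ob,\,\nu\geq v^\ob) \geq \Pr(\cst<\delta)\Pr(\nu\geq v^\ob)$. Hence the first term of the identity is at least $\delta\Pr(\cst<\delta)\cdot v^\ob\Pr(\nu\geq v^\ob)$, which diverges to $+\infty$ under $v\Pr(\nu>v)\to\infty$. Since the other two terms on the right are nonnegative, this contradicts the finite left-hand side $\bar u$, forcing $y^\ob\to 0$.

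The main obstacle will be verifying interiority of $Q^\ob$ so the FOC-Q is an equation rather than a boundary inequality, together with handling the theorem's convention in which $y^\ob = +\infty$ when the minimum is unattained --- this strengthens the thin-tail conclusion but must be treated consistently. A secondary technicality in the thin-tail argument is the dominated-convergence step, which requires choosing $y^*$ at a continuity point of $\cst$'s CDF; this is possible because the CDF has at most countably many atoms.
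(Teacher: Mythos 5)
Your proposal is correct and follows essentially the same route as the paper: one-sided first-order conditions in $Q$ at the saddle point, combined with the balance condition into a bound relating $\bar u$, $y^\ob$, and $\nu^\ob \cdot \Pr_\lambda(\nu \geq \nu^\ob)$, then a contradiction from the relevant tail condition in each case. The obstacles you flag are resolved exactly as you anticipate: the paper derives $Q^\ob \in (0,1)$ and $y^\ob < \infty$ from $\mathbb{E}_\lambda[\nu] = \infty$ via Slater's condition and a positive-welfare argument, and---because $\ell(\cdot\,;y)$ is concave but kinked, so only one-sided derivatives exist---your exact identity is replaced by the inequality $\bar u \leq (\text{RHS})$ coming from the right derivative being $\leq 0$ (all the thin-tail case needs) and by $\bar u + \bar c \geq (\text{first term})$ coming from the left derivative being $\geq 0$ (all the thick-tail case needs).
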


The condition that $\nu$ have infinite expectation ensures that the optimal mechanism under participation constraints generates positive welfare even for large $\rho.$ It is not a particularly restrictive assumption given that $\nu$ is a ratio of usage benefits to costs.  It is satisfied, for instance, if $\liminf_{\varepsilon \rightarrow 0} \Pr_\lambda(c < \varepsilon)/\varepsilon > 0$ and usage benefits are bounded away from zero.  The assumption is also compatible with both tail conditions appearing in the proposition.  For instance, if $\Pr_\lambda(\nu > v) \sim 1/(v \cdot (\log v)^\alpha)$ for large $v$ for some $\alpha \geq 1$, then $\mathbb{E}_\lambda[\nu] = \infty$ while $\lim_{v \rightarrow \infty} v \Pr_\lambda(\nu > v) = 0$.  On the other hand, if $\Pr_\lambda(\nu > v) \sim 1/v^\alpha$ for large $v$ for some $\alpha \in (0, 1)$, then $\mathbb{E}_\lambda[\nu] = \infty$ while $\lim_{v \rightarrow \infty} v \Pr_\lambda(\nu > v) = \infty$.

When types are ordered, the tail conditions on $\nu$ can be recast in terms of the density of costs at zero.  This is because under ordered types, valuations are high when $\theta$ is large, which is also when costs are low.  As a result, a thick right tail is equivalent to an unbounded cost density near zero, while conversely a thin right tail is equivalent to a vanishing cost density.

\begin{proposition}\label{prop:compStatYPOrdered}
Under the assumptions of Proposition \ref{prop:compStatYP}, suppose additionally that types are ordered and $\limsup_{\theta \rightarrow 0} u(\theta) < \infty$ while $\liminf_{\theta \rightarrow 0} u(\theta) > 0.$

\begin{itemize}
\item If $\lim_{\varepsilon \rightarrow 0} Pr_\lambda(c < \varepsilon)/\varepsilon = 0$, then $\lim_{\rho \rightarrow \infty} y^\ob = \infty.$

\item If $\lim_{\varepsilon \rightarrow 0} \Pr_\lambda(c < \varepsilon)/\varepsilon = \infty$, then $\lim_{\rho \rightarrow \infty} y^\ob = 0.$
\end{itemize}
\end{proposition}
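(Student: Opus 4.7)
The plan is to derive Proposition \ref{prop:compStatYPOrdered} from Proposition \ref{prop:compStatYP} by translating the hypothesis on the density of $c$ at zero into the corresponding tail condition on $\nu$ that appears in that earlier result. Under ordered types, $\nu(\theta)$ is increasing in $\theta$ while $c(\theta) = -\theta$ is decreasing in $\theta$, so the pair $(\nu,c)$ is countermonotone; in particular, $\nu$ and $c$ are negatively associated in any standard sense, which verifies the extra hypothesis needed in the second branch of Proposition \ref{prop:compStatYP}.

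The core technical step is the following equivalence: under the stated local bounds on $u$,
\[\lim_{\varepsilon\to 0}\frac{\Pr_\lambda(c<\varepsilon)}{\varepsilon}=0 \ \Longleftrightarrow\ \lim_{v\to\infty}v\,\Pr_\lambda(\nu>v)=0,\]
and the analogous equivalence with each side replaced by $\infty$. To prove it, I would first fix $\delta>0$ small enough that $\underline{u}\le u(\theta)\le\bar u$ for all $\theta\in(-\delta,0)$, where $0<\underline{u}\le\bar u<\infty$ are chosen using $\liminf_{\theta\to 0}u>0$ and $\limsup_{\theta\to 0}u<\infty$. Split
\[\Pr_\lambda(\nu>v) \;=\; \Pr_\lambda(\nu>v,\,c<\delta) \;+\; \Pr_\lambda(\nu>v,\,c\ge\delta).\]
On $\{c<\delta\}$, i.e.\ $\theta\in(-\delta,0)$, the event $\{\nu>v\}$ equals $\{c<u(\theta)/v\}$, and for $v$ large enough that $\bar u/v<\delta$ this is sandwiched between $\{c<\underline{u}/v\}$ and $\{c<\bar u/v\}$. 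On $\{c\ge\delta\}$, $\{\nu>v\}\subseteq\{u>v\delta\}$, and since $u\in\mathscr{L}^1(\lambda)$, dominated convergence applied to $u\cdot\mathbf{1}_{[u>v\delta]}$ yields $v\,\Pr_\lambda(u>v\delta)\to 0$. Combining these bounds and multiplying through by $v$,
\[\underline{u}\cdot\frac{\Pr_\lambda(c<\underline{u}/v)}{\underline{u}/v} \;\le\; v\,\Pr_\lambda(\nu>v) \;\le\; \bar u\cdot\frac{\Pr_\lambda(c<\bar u/v)}{\bar u/v} \;+\; o(1),\]
from which the desired two-sided equivalence follows by letting $v\to\infty$ (equivalently $\varepsilon=\underline{u}/v$ or $\bar u/v$ tends to $0$).

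With this equivalence in hand, the two cases of Proposition \ref{prop:compStatYPOrdered} are immediate consequences of the corresponding cases of Proposition \ref{prop:compStatYP}. The only mild subtlety is isolating the contribution of types near zero from those with $c\ge\delta$ in the tail of $\nu$; controlling the latter uses only the Markov-type estimate available from integrability of $u$, so no further hypothesis on the global behavior of $u$ is required. I expect this bookkeeping step—writing the sandwich cleanly and verifying that the $\{c\ge\delta\}$ piece is $o(1/v)$—to be the main (and only) obstacle; the rest is a direct appeal to the previous proposition.
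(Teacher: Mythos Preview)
Your proposal is correct and follows essentially the same route as the paper: both arguments reduce to Proposition~\ref{prop:compStatYP} by converting the density condition on $c$ near zero into the tail condition on $\nu$, and both note that countermonotonicity of $(\nu,c)$ under ordered types supplies the negative-association hypothesis. The paper carries out the translation via the generalized inverse $\chi(v)=-\inf\{\theta:\nu(\theta)>v\}$ rather than your direct sandwich with uniform bounds $\underline u,\bar u$; incidentally, since $\nu$ is increasing under ordered types, it is bounded on $\{c\ge\delta\}$, so your $\{c\ge\delta\}$ term is in fact identically zero for large $v$ and the $L^1$ estimate is not needed.
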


The machine's uptime is a function both of the breakage rate and of aggregate contribution levels.  As discussed in our analysis of the first-best mechanism, aggregate contributions need not shrink with $\rho$.  Nonetheless, it turns out that contribution levels never grow fast enough to outweigh the higher breakage rate, and the machine's uptime decreases in $\rho.$ 

\begin{proposition}\label{qdaggerroh} $Q^\ob$ is nonincreasing in $\rho$.\end{proposition}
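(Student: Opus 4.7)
The plan is to reparametrize the designer's problem in terms of $\eta = (1-Q)/Q$ and then apply monotone comparative statics (Topkis) to show that the optimal $\eta$ is nondecreasing in $\rho$, which gives that $Q^\ob = 1/(1+\eta^\ob)$ is nonincreasing in $\rho$. Throughout I assume $Q^\ob > 0$; the boundary case $Q^\ob = 0$ is handled separately by observing from the feasibility inequality $\rho Q \leq \int \min(1-Q, Q\nu)\,\td\lambda$ that once $Q^\ob$ hits zero, it remains zero for all larger $\rho$.

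First I would change variables by writing $p(\theta) = P(\theta)/Q$ and $\eta = (1-Q)/Q$. The objective becomes $\frac{1}{1+\eta}(\bar u - \int p c\,\td\lambda)$, the balance condition becomes $\int p\,\td\lambda = \rho$, and the combined constraint \eqref{eq:participationConstraintModified} becomes $0 \leq p(\theta) \leq \min(\eta, \nu(\theta))$. For each $\eta$ above a feasibility threshold, the inner problem is to minimize $\int p c\,\td\lambda$ subject to a total contribution budget and a pointwise capacity bound; this is a parametric LP solved by allocating capacity to the cheapest types first, giving $p^\ast(\theta) = \min(\eta, \nu(\theta))\,\one_{[c < y(\eta;\rho)]}$ with threshold $y(\eta;\rho)$ determined by $\int \min(\eta,\nu)\,\one_{[c<y]}\,\td\lambda = \rho$ and minimum cost $C^\ast(\eta;\rho)$. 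The designer's problem reduces to $W^\ob(\rho) = \max_\eta V(\eta;\rho)$ with $V(\eta;\rho) = [\bar u - C^\ast(\eta;\rho)]/(1+\eta)$.

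Next I would establish that $V$ has increasing differences in $(\eta,\rho)$. By the envelope theorem (or LP duality), the shadow price of the budget constraint is exactly $y(\eta;\rho)$, so $\partial C^\ast/\partial\rho = y(\eta;\rho)$ and thus $\partial V/\partial\rho = -y(\eta;\rho)/(1+\eta)$. Since increasing $\eta$ weakly loosens the pointwise capacity bound $\min(\eta,\nu)$, the threshold needed to meet the contribution budget is nonincreasing in $\eta$; implicit differentiation of $\int \min(\eta,\nu)\,\one_{[c<y]}\,\td\lambda = \rho$ yields $\partial_\eta y = -(\partial_\eta D)/(\partial_y D) \leq 0$. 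Combining these,
\[\frac{\partial^2 V}{\partial \eta\,\partial \rho} \;=\; \frac{y \,-\, (1+\eta)\,\partial_\eta y}{(1+\eta)^2} \;>\; 0.\]

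Finally, by Topkis's theorem the correspondence $\arg\max_\eta V(\eta;\rho)$ is ascending in the strong set order, so $\eta^\ob(\rho)$ is nondecreasing and $Q^\ob(\rho) = 1/(1+\eta^\ob(\rho))$ is nonincreasing, which is the claim. The main obstacle is rigorous handling of non-smoothness: when the joint distribution of $(c,\nu)$ under $\lambda$ has atoms, $y(\eta;\rho)$ and $C^\ast$ are only piecewise smooth, and the differential calculation above is formally a heuristic. I would instead verify the finite-difference Topkis inequality $V(\eta_1;\rho_1) + V(\eta_2;\rho_2) \geq V(\eta_2;\rho_1) + V(\eta_1;\rho_2)$ for $\eta_1 < \eta_2$ and $\rho_1 < \rho_2$ directly, by comparing the four underlying minimum-cost linear programs at the corners — each admits a clean description without any continuity assumption on $\lambda$.
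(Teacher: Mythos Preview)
Your approach is correct and takes a genuinely different route from the paper's. Both arguments ultimately establish that the value function $\min_y \ell(Q;y)$ is submodular in $(Q,\rho)$ (equivalently, after your change of variable, that $V(\eta;\rho)$ is supermodular in $(\eta,\rho)$), but they arrive there differently. The paper never solves the inner problem: it keeps the saddle-point structure, writes $\ell(Q;y)=Q\cdot h(Q,y,\rho)$, checks that $h$ is submodular and nonincreasing in $\rho$, and then proves a small general lemma that $\argmax_x\min_y x\cdot h(x,y,z)$ is nonincreasing in $z$ under these hypotheses (using that submodularity survives partial minimization and multiplication by the nonnegative coordinate $x$). This delivers the result with no envelope machinery and no smoothness worries. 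You instead eliminate the dual variable by solving the inner cost-minimization explicitly, reparametrize to $\eta$, and verify increasing differences via the envelope identity $\partial_\rho C^\ast=y$ together with the observation that the threshold $y(\eta;\rho)$ is nonincreasing in $\eta$. This is more direct and makes the economic mechanism transparent, at the price of the non-smoothness patch you flag; that patch does go through, since writing $C^\ast(\eta;\rho_2)-C^\ast(\eta;\rho_1)=\int_{\rho_1}^{\rho_2} y(\eta;s)\,\td s$ (valid because the LP value is absolutely continuous in its right-hand side) reduces the finite-difference inequality to the monotonicity in $\eta$ of a product of two nonnegative nonincreasing factors, $y(\eta;s)$ and $1/(1+\eta)$. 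The paper's route buys a clean, smoothness-free argument and a reusable saddle-point lemma; yours buys a self-contained reduction to a standard one-parameter Topkis problem.
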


While the assertion is intuitive, we are not aware of any standard results regarding comparative statics of saddle points which directly imply it. The difficulty arises from the fact that the function $\ell$ of which $(Q^\ob, y^\ob)$ is a saddle point is neither sub- nor supermodular in $(Q, y, \rho)$. To prove the result, we identify an appropriate modification of $\ell$ which is submodular and apply standard results to that function.

\section{Incentive-compatible mechanisms}\label{se:ic}
We now solve for an optimal mechanism when the designer must induce agents not just to participate, but also to truthfully report their preferences.  Formally, we impose the participation constraints \eqref{eq:participationConstraint} as well as the IC constraints \eqref{eq:ICConstraint} for all $\theta, \theta' \in \Theta.$  We refer to a mechanism which satisfies both sets of constraints as a \emph{screening} mechanism.

As we have seen, when agents' types are observable, there is no benefit to restricting any agent's access to the machine.  However, when agents must self-report their contribution costs, access restrictions become crucial for screening.  The usage levels $R$ must therefore be designed in tandem with the contribution levels $P$.

The following result characterizes the main qualitative features of an optimal screening mechanism.
\begin{theorem}\label{th:ic} There exists an optimal screening mechanism $(R^*, P^*, Q^*)$ which satisfies
\[\left(R^*(\theta), P^*(\theta)\right)=\begin{cases}(Q^*, 1-Q^*), &\text{ if }\bar\nu\le \nu(\theta)\\(\hat R,\hat P),&\text{ if }\underline\nu\le \nu(\theta) < \bar\nu\\(0,0),&\text { if }\nu(\theta) < \underline\nu\end{cases}\]
for some $0 < \underline\nu \leq \bar\nu\le\infty$ and $0 < \hat R \leq Q^*$ and $0 < \hat P \leq 1 - Q^*.$  Moreover, if $\bar\nu=\infty$ then $\hat{R} = Q^*$.
\end{theorem}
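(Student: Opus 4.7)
My plan is to mirror the Lagrangian-relaxation strategy used for Theorem~\ref{th:obedience}, now handling the full IC system. I would relax the balance condition~\eqref{eq:balanceConstraint} with multiplier $y\geq 0$, treat $Q$ as a fixed auxiliary variable, and work with the Lagrangian
\[\mathscr{L}(R,P,Q;y)=\int \bigl[R(\theta)u(\theta)-P(\theta)(c(\theta)-y)\bigr]\,\lambda(\td\theta)-y\rho Q,\]
subject to~\eqref{eq:simplexConstraint}, \eqref{eq:participationConstraint}, and~\eqref{eq:ICConstraint}. Since agent preferences depend on $\theta$ only through $\nu(\theta)$, single-crossing applies, so types with the same valuation may receive a common $(R,P)$ without loss. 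I would represent the mechanism by a menu $M\subseteq[0,Q]\times[0,1-Q]$ containing $(0,0)$ and encode IC by the normalized indirect utility $W(v)=\sup_{(R,P)\in M}(Rv-P)$, which is convex, nondecreasing, and nonnegative with $R(v)=W'(v)$ a.e.\ and $P(v)=vR(v)-W(v)$.

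Second, I would reduce the Lagrangian to a functional of $R$ alone. A short algebraic manipulation using $P=\nu R-W$ and the identity $\nu c=u$ gives the per-type integrand $y\nu R+(c-y)W$. Pushing $\lambda$ forward through $\nu$, letting $g$ denote the induced density and $\bar c(v)=\mathrm{E}_\lambda[c\mid\nu=v]$, and applying Fubini with $W(v)=\int_0^v R(s)\,\td s$ yields
\[\mathscr{L}=\int_0^\infty R(v)\psi(v)\,\td v-y\rho Q,\qquad \psi(v)=yvg(v)+\int_v^\infty(\bar c(s)-y)g(s)\,\td s,\]
to be maximized over nondecreasing $R:[0,\infty)\to[0,Q]$ subject to $vR(v)-W(v)\leq 1-Q$.

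Third, I would parameterize nondecreasing $R$ by a positive Borel measure $\alpha$ via $R(v)=\alpha([0,v])$. The mass constraint becomes $\alpha([0,\infty))\leq Q$, while a direct computation shows $\sup_v[vR(v)-W(v)]=\int t\,\td\alpha(t)$, so the $P$ simplex bound becomes the single moment constraint $\int t\,\td\alpha(t)\leq 1-Q$. A second Fubini rewrites the objective as $\int\Psi(t)\,\td\alpha(t)$ with $\Psi(t)=\int_t^\infty\psi(v)\,\td v$, and the problem becomes a linear program over positive measures on $[0,\infty)$ with two scalar constraints. A standard extreme-point argument for such measure LPs then shows that an optimum may be taken with $\alpha$ supported on at most two points — i.e., $R$ has at most two jumps and the menu has at most two nontrivial options. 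When only the mass constraint binds, $\alpha$ is a single atom at some $v^*$, yielding $\bar\nu=\infty$, $\hat R=Q^*$, and $\hat P=Q^*v^*\leq 1-Q^*$; when both constraints bind, $\alpha$ is supported on $\{\underline\nu,\bar\nu\}$ with masses determined by the two equalities, yielding the tripartite menu of the theorem, with the top option saturating both simplex bounds at $(Q^*,1-Q^*)$.

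The principal obstacle is rigorously justifying the two-atom reduction: while the result is standard in finite dimensions, here the feasible set lives in an infinite-dimensional space of positive measures. I would handle this by first arguing that we may restrict to measures with compact support (any mass at infinity contributes nothing to the objective while consuming the moment budget), and then invoking the classical fact that extreme points of the set of positive measures with fixed mass and first moment are supported on at most two points — a consequence of a Carathéodory-type perturbation argument: three atoms plus two linear constraints leaves a one-dimensional subspace of signed perturbations preserving feasibility. The parameters $(\underline\nu,\bar\nu,\hat R,\hat P,Q^*)$ are then pinned down by solving the outer saddle-point problem in $(Q,y)$ together with the Lagrange multiplier on the moment constraint, in direct parallel with the construction of $(Q^\ob,y^\ob)$ in Theorem~\ref{th:obedience}.
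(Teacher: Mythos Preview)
Your proposal is correct and follows essentially the same route as the paper: Lagrangian relaxation of the balance condition, reduction of the inner $(R,P)$-problem to a Myersonian monopoly sale with a payment cap, identification of nondecreasing allocation rules with positive measures subject to a mass and a first-moment constraint, and an extreme-point argument yielding at most two atoms. The paper packages the inner step as a standalone appendix proposition (invoking Helly's theorem for weak$^*$ compactness and citing Karr's characterization of extreme points under moment constraints) and separately handles the case where the dual admits no finite minimizer, but the skeleton is identical.
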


In words, the designer screens agents by offering up to two distinct membership tiers: a full-access tier with high contribution requirements, and potentially also a partial-access tier with reduced contributions.  Whenever multiple membership tiers are optimal, the high tier demands the maximum contribution from all members in that tier.  Finally, all agents have the ability to opt out of the group and receive the $(0, 0)$ allocation.  Unlike the case of observable preferences, in general some agents do opt out of the group under an optimal screening mechanism.

The proof of this result relies on a reduction of the mechanism design problem to a Myersonian monopoly-sale problem.  In this reduction, buyers have quasilinear utility with valuations $\nu(\theta)$ for the good.  The quantity $R(\theta)$ becomes the probability with which a buyer of type $\theta$ receives the good while $P(\theta)$ is the price he pays for it.  The seller's objective is a weighted average of revenue, reflecting a Lagrange multiplier on the balance condition, and the utility of each buyer type.  Unlike a classic monopoly-sale problem, the simplex bound $P(\theta) \leq 1 - Q$ imposes an upper bound on the price that each buyer can pay for the good.  As a result, a posted-price mechanism may not be optimal and fractional allocations can be beneficial for screening buyers. (We provide an analysis of the monopoly-sale problem under a payment bound in Appendix \ref{singlegood}.)

We show that an optimal allocation rule corresponds to an extreme point in the set distribution functions which satisfy a moment inequality equivalent to the budget constraint.  In general, extreme points may involve one additional point of support for each moment inequality, implying our result.  
The same argument drives the findings of \textcite{le2019persuasion} and \textcite{doval2024constrained}, who show that in an information design context, each additional constraint adds up to one additional point of support in the set of posteriors induced by an optimal mechanism.  


Theorem \ref{th:ic} ensures that no more than two membership tiers are required at the optimum.  The following example exhibits a setting in which exactly two tiers are required.

\begin{example}Suppose that $\Theta = \{L, M, H\}$, with equal probability of each type. All types have contribution cost $c(\theta) = 1$, while usage utilities are $u(H)=5, u(M)=1, u(L)=0.1$. The breakage rate is $\rho=1$. 

Under the first-best mechanism, $Q^\fb =0.5$, and all types fully contribute, so that $(R^\fb(\theta), P^\fb(\theta)) = 0.5$ for all types.

Under the optimal screening mechanism, $Q^*=0.3$. Type $H$ receives full access in return for a full contribution, type $M$ receives partial access in return for a reduced contribution, and type $L$ leaves the group. The optimal allocations are $R^*(H)=0.3, R^*(M)=0.2, R^*(L)=0$ and $P^*(H)=0.7, P^*(M)=0.2, P^*(L)=0$.\end{example}

\section{Model interpretation: Fluid interaction with long-lived agents}\label{sec:fluidInterpretation}

In this section, we show how our reduced-form model can be interpreted as a club comprised of a large population of long-lived agents, who use and contribute to the machine in an incremental manner.  We describe this interpretation and derive the reduced-form physical constraints from it in Section \ref{subsec:fluidMicrofoundation}.  We then relate it to the design of crowdsourced recommendation platforms in Section \ref{subsec:fluidApplication}.

\subsection{Microfounded model}\label{subsec:fluidMicrofoundation}

The group is comprised of a mass of atomistic (i.e., infinitesimal), long-lived agents. Agents live for a random period of time drawn from a (type-dependent) distribution $F(\cdot\mid\theta)$ with expectation $1/\kappa$ for each type. (For simplicity, we assume that $\kappa$ is the same for all types.)   Deaths are balanced by an inflow of new agents at rate $\kappa.$  We assume that the population of agents is in steady state, so that the empirical distribution of agent ages $G(\cdot\mid\theta)$ is time-invariant 
with density $g(z\mid\theta)=\kappa \cdot (1-F(z\mid\theta))$.  Agent types are drawn from the distribution $\lambda$. Because all types have the same expected lifespan, $\lambda$ is also the empirical distribution of agent types in steady state. 

At each moment when the machine is working, each agent can use its services, enjoying a flow benefit $u(\theta)$ from (full) access.  At each moment when the machine is broken, each agent can contribute to fixing it, incurring a flow cost $c(\theta)$ from (fully) contributing.  Agents evaluate the payoff of participating in the mechanism according to the expected cumulative usage benefits minus contribution costs over their lifespan.

When the machine breaks, a random \emph{contribution quantum} (i.e., a cumulative contribution aggregated across agents and time) is required to fix it.  The size of this quantum is drawn from a fixed distribution with expectation normalized to 1, independent of the machine's history.  When the machine is fixed, it breaks after a random \emph{lifespan} drawn from a fixed distribution with expectation $1/\rho$, independent of the machine's history.

A mechanism assigns personalized (type- and age-dependent) usage and contribution levels, varying based on the current machine state as well as potentially the history of the system.  We restrict attention to mechanisms that induce a stationary environment for agents born at different calendar times. Formally, a mechanism must be compatible with a stationary \emph{system process} $X_t = (\omega_t, r_t, p_t),$ where:
\begin{itemize}
    \item $\omega_t \in \{W, B\}$ is the state of the machine at time $t$
    \item $r_t:\Theta\times \bfr_+ \to [0,1]$  records the time-$t$ usage level $r_t(\theta,z)$ of each agent of type $\theta$ and age $z$
    \item $p_t:\Theta\times \bfr_+ \to [0,1]$ records the time-$t$ contribution level $p_t(\theta,z)$ of each agent
\end{itemize}
Note that the information in $X$ is sufficient to derive the entire history of the system, including the lifespans of the machine (using the transition times of $\omega$ from W to B) and the contribution quanta (using the transition times of $\omega$ from B to W plus the contribution levels $r$).  

Given any stationary system process, let 
$\bar{r}_t(\theta) = \int r_t(\theta, z)g(z \mid \theta)~\td\theta$ and $\bar{p}_t(\theta) = \int p_t(\theta, z) g(z \mid \theta)~\td\theta$
be the group-average usage and contribution levels for each type.
 The associated \emph{reduced form} $(R, P, Q)$ is well-defined, where $R(\theta)= \mathbb{E}\bar{r}_t(\theta), P(\theta) =\mathbb{E}\bar{p}_t(\theta), Q=\Pr(\omega_t=W)$ and each of these expressions is independent of $t$ by stationarity.  
 
 The normalized expected  payoff for an agent of type $\theta$ born at time $t$ can be written in terms of this reduced form as 
\[\kappa \cdot \mathbb{E}\int_0^\infty \td F(\tau \mid \theta)\int_0^\tau\Big(r_{t+z}(\theta,z) \cdot u(\theta)  -p_{t+z}(\theta,z) \cdot c(\theta)\Big) \, \td z =R(\theta) \cdot u(\theta) - P(\theta) \cdot c(\theta),\]
as in~\eqref{agentspayoff}, where the inner integral aggregates the agent's flow payoffs over his lifespan and the outer one averages over possible lifespans.  It follows that utilitarian welfare, defined as the average welfare of a randomly-selected agent, equals \eqref{objective-full}. (By the ergodic theorem, this objective is equivalent to an alternative specification of welfare as the long-run average flow payoffs of group members.)

\emph{Markovian} mechanisms condition current usage and contribution levels only on the current state of the machine.  Formally, a Markovian mechanism is characterized by a Borel function $\sigma:\Theta\times\{W,B\}\to [0,1]$, with the interpretation that an agent of type $\theta$ is assigned usage level $\sigma(\theta,W)$ when the machine is working and contribution level $\sigma(\theta,B)$ when the machine is broken.  We do not formally describe more general mechanisms, as we will show that any system process satisfying certain necessary conditions is payoff-equivalent to one induced by a Markovian mechanism.

\begin{definition}~\label{as:fluid}
A stationary system process is \emph{admissible} if:
\begin{enumerate}
\item Machine lifespans are i.i.d.\ with expectation $1/\rho$.
\item Contribution quanta are i.i.d.\ with expectation $1$.
\item $r_t(\theta)\le \one_{[\omega_t=W]}$ and $p_t(\theta) \le \one_{[\omega_t=B]}$ for all $t$.
\end{enumerate}\end{definition}
Every stationary system process compatible with some (not necessarily Markovian) mechanism is admissible.  The first two requirements of admissibility ensure that the process respects the independence of successive lifespans and cumulative contributions.  The last requirement ensures that the machine is used only when it is working and contributions are collected only when the machine is broken.

The following result establishes that every reduced form of an admissible system process satisfies the physical constraints.  Furthermore, all triples $(R, P, Q)$ satisfying the physical constraints can be achieved via a Markovian mechanism.  In particular, the physical constraints are necessary and sufficient for feasibility of an allocation.

\begin{proposition}\label{pr:physfluid}\begin{enumerate}
\item Fix an admissible system process. Then its reduced form satisfies the physical constraints.

\item Suppose that the triple $(R, P, Q)$ satisfies the physical constraints.  Then there exists a Markovian mechanism $\sigma$ and a stationary system process $X$ such that $\sigma$ is compatible with $X$ and $(R, P, Q)$ is the reduced form of $X$.
\end{enumerate}
\end{proposition}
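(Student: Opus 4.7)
The plan is to tackle the two claims separately: Part 1 by reading the physical constraints directly off the admissibility conditions together with a renewal-theoretic count of contribution events, and Part 2 by constructing an explicit Markovian mechanism and a stationary alternating renewal process whose reduced form matches the prescribed $(R,P,Q)$.

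For Part 1, the simplex bounds~\eqref{eq:simplexConstraint} are immediate: condition (3) of admissibility gives $r_t(\theta)\le \one_{[\omega_t=W]}$ and $p_t(\theta)\le \one_{[\omega_t=B]}$ pointwise, and taking expectations (using stationarity to eliminate $t$ and Fubini to commute with the $G$-average defining $\bar{r}_t,\bar{p}_t$) yields $R(\theta)\le Q$ and $P(\theta)\le 1-Q$. For the balance condition~\eqref{eq:balanceConstraint} I would apply renewal theory to the alternating on--off process of the machine. By conditions (1) and (2), successive lifespans and quanta are i.i.d.\ with means $1/\rho$ and $1$. The elementary renewal theorem then pins down the long-run rate of breakage events at $\rho Q$, so that the long-run average aggregate contribution flow equals $\rho Q \cdot 1 = \rho Q$. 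This same quantity, written via Fubini together with the ergodic representation of $P$ displayed just before Proposition~\ref{pr:physfluid}, equals $\int P\,\td\lambda$, which is the balance condition.

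For Part 2, fix $(R,P,Q)$ satisfying the physical constraints. The case $Q=1$ is excluded by combining $P\le 1-Q=0$ with $\rho>0$ and \eqref{eq:balanceConstraint}, and the case $Q=0$ is trivial because then $R\equiv 0$ and $P=0$ $\lambda$-a.e. Assuming $0<Q<1$, I would define the Markovian mechanism
\[\sigma(\theta,W)=R(\theta)/Q,\qquad \sigma(\theta,B)=P(\theta)/(1-Q),\]
which takes values in $[0,1]$ by \eqref{eq:simplexConstraint}; \eqref{eq:balanceConstraint} then forces $\int \sigma(\theta,B)\,\td\lambda = \rho Q/(1-Q)$. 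I would then build the system process as the stationary version of an alternating renewal process on $\{W,B\}$: working intervals have i.i.d.\ lengths drawn from the prescribed lifespan distribution, and broken intervals are assigned lengths $M_i(1-Q)/(\rho Q)$, where $M_i$ are the i.i.d.\ quanta, so that the constant aggregate contribution rate $\rho Q/(1-Q)$ integrates to $M_i$ over the broken interval and exactly fixes the machine. Setting $r_t(\theta,z)=\sigma(\theta,W)$ on working intervals and $p_t(\theta,z)=\sigma(\theta,B)$ on broken intervals delivers admissibility conditions (1)--(3), and the reduced form is $(\sigma(\cdot,W)\cdot Q,\,\sigma(\cdot,B)\cdot(1-Q),\,Q)=(R,P,Q)$ as required.

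The main obstacle I expect is the rigorous construction of the doubly infinite stationary system process---specifically, invoking the stationary version of an alternating renewal process, verifying the i.i.d.\ property over the full real line rather than only after time $0$, and cleanly applying the ratio form of the renewal theorem to justify the rate $\rho Q$ used in Part 1. These are standard maneuvers but merit careful execution; nothing else in the argument appears to require a genuinely new idea, so the proof essentially reduces to careful bookkeeping through the constructed process.
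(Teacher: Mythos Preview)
Your proposal is correct and follows essentially the same route as the paper: for Part~1 you derive the simplex bounds pointwise and obtain~\eqref{eq:balanceConstraint} by counting breakdown/repair events via renewal reasoning, and for Part~2 you define the same Markovian mechanism $\sigma(\theta,W)=R(\theta)/Q$, $\sigma(\theta,B)=P(\theta)/(1-Q)$ and build the stationary alternating renewal process with broken-interval mean $(1-Q)/(\rho Q)$. The only tactical difference is that where you invoke the elementary renewal theorem, the paper instead uses a direct sandwich bound $\sum_{i=1}^{n(T)-1} L_i \le W(T) \le \sum_{i=0}^{n(T)} L_i$ (and similarly for the quanta $M_i$) together with the strong law of large numbers, which sidesteps exactly the technical obstacle you flag at the end.
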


\subsection{Application: Crowdsourced recommendation platforms}\label{subsec:fluidApplication}

We now discuss how to apply the fluid interpretation of our model as a stylized model of ecommerce and digital content platforms such as Netflix, Amazon, TikTok, and Tiktok, which rely on user experiences with new products to train their recommendation algorithms.  

In this application, the agents are users of the platform, while the machine is a recommendation algorithm which makes personalized product recommendations to users.  The machine is working when it is informed about the attributes of the current product inventory.  The machine breaks when the platform's inventory turns over and old products are replaced by new ones with unknown attributes.  It is fixed by recommending new products to users and monitoring their experiences.  

Inventory turnover could occur for a variety of reasons.  For instance, user tastes may change over time; old products may be replaced by new versions; and the mix of products available on the platform may change.  As an example of the first possibility, Spotify must re-learn the attributes of its music library over time as trends change and the popularity of particular artists changes.  As an example of the second, Amazon needs to learn the attributes of each new generation of wireless headphones as technology advances.  As an example of the third, Netflix periodically rotates the library of content to which it holds streaming licenses, and it must learn the attributes of newly-licensed content.

Platforms monitor user experiences through several channels.  One avenue is active feedback from product ratings.  An additional source is passive feedback from observable user behavior.  For instance, Netflix tracks behavior such as whether users finished a recommended television show, how many episodes they watched, their viewing time per session, and whether they subsequently viewed related shows.  Both active and passive feedback may be costly for users to provide, the former due to the time and energy required to provide ratings, and the latter due to the potential for a poor experience with an untested product.

Usage of the machine corresponds to informative, personalized recommendations about well-understood products.  Access is controlled by modulating the informativeness of recommendations, for instance by mixing in recommendations for products with which the user is already familiar.  Contributions correspond to recommendations to try untested products.  

A user's commitment to contribute in return for access can be implemented by bundling recommendations for tested and untested products.  In other words, a user can simply be shown a stream of recommendations from the platform, with no indication of whether a particular recommendation is intended to explore a new product or exploit the engine's knowledge of an older product.  So long as product experiences are sufficiently noisy or observed with delay, users can accurately assess only the long-run quality of recommendations but not the likely purpose of any particular one.  In those environments, the ex interim participation constraints we impose in our model are the binding ones.


This application can be formally mapped onto our abstract model in the following way.  Uncertainty about the available products on the platform is summarized by a state $\omega \in \Omega$ pertaining to an inventory $X.$  Each user receives a flow payoff $v(\theta, x, \omega)$ from trying product $x \in X$ from the library when the library's state is $\omega.$  When $\omega$ is known, the algorithm recommends $x^*(\theta, \omega) \in \argmax_{x \in X} v(\theta, x, \omega),$ generating a usage benefit $u(\theta) = v(\theta, x^*(\theta, \omega), \omega)$ for each user.  Since consumers may return to the platform repeatedly over time, a choice $x$ from the inventory should be interpreted as a bundle of individual products.

When the machine breaks, $\omega$ is drawn anew from some known distribution $\mu^\omega$.  (For notational simplicity, the inventory $X$ is held fixed, with the understanding that elements of $X$ are merely labels that carry no informational content.)  A mass of products of random size must be consumed by users before $\omega$ is revealed, at which point the algorithm can once again make informed recommendations.  

While the algorithm is learning $\omega,$ users are recommended products from the library according to some (potentially type-dependent) distribution $\mu^{x}(\theta)$. Hence in the learning regime, users receive a usage benefit $v^0(\theta) = \mathbb{E}_{\omega, x}(v(\theta, x, \omega) \mid \theta) < 0$. 
Users may additionally incur direct costs of providing explicit feedback such as ratings, which can be captured by an additional flow cost $c^0 \geq 0$ of providing ratings.  
Each user's effective cost of contributing is $c(\theta) = c^0 -v^0(\theta) > 0.$


\section{Model interpretation: Sequential interaction with short-lived agents}\label{sec:poissonInterpretation}

In this section, we show how our model can be interpreted as a sequence of interactions with a population of short-lived agents, who each use and contribute to the machine exactly once.  We describe this interpretation and derive the reduced-form constraints from it in Section \ref{subsec:poissonMicrofoundation}.  We then relate it to the design of incentive-aware bandit exploration models in Section \ref{subsec:poissonApplication}.

\subsection{Microfounded model} \label{subsec:poissonMicrofoundation}
Short-lived agents arrive according to a Poisson process with rate normalized to $1$. Arriving agents' types are drawn i.i.d.\ from the distribution $\lambda$.  An arriving agent can use the machine if it is working and can contribute to fix the machine if it is broken, after which he departs forever.  Using a working machine gives an agent of type $\theta$ usage benefit $u(\theta),$ while contributing to fix a broken machine incurs a cost $c(\theta).$  A contribution fixes the machine with a probability we normalize to 1.  When the machine is fixed, it breaks after a random lifespan drawn from a fixed distribution with expectation $1/\rho$, independent of the machine's history.

As in our fluid interpretation (section \ref{sec:fluidInterpretation}), we focus on mechanisms which are compatible with a stationary system process.  In a sequential arrival environment, the system process tracks agent arrivals, their types, the actions they took, and the state of the machine.  Every stationary system process induces a well-defined reduced form $(R, P, Q),$ where:
\begin{itemize}
\item $Q$ is the long-run proportion of time that the machine is working. Equivalently, it is the probability that the machine is working at any given time. 
\item $P:\Theta\to [0,1]$ is the long-run density of machine usage.  That is, for every Borel set $B \subset \Theta$, the long-run frequency of agents who are of types $\theta\in B$ and use the machine is $\int_B P~\td\lambda$. 
\item $R:\Theta\to [0,1]$ is the long-run density of contributions.  That is, for every Borel set $B \subset \Theta$, the long-run frequency of agents who are of types $\theta\in B$ and contribute is $\int_B R~\td\lambda$. 
\end{itemize}

Given these definitions, the expected payoff for an agent of type $\theta$ who is born at type $t$ can be written in terms of this reduced form as in \eqref{agentspayoff}, and the long-run expected welfare of an arriving agent is as in \eqref{objective-full}.

Markovian mechanisms stochastically direct arriving agents to use or fix the machine with probabilities that depend only on their type and the current state of the machine.  Formally, a Markovian mechanism is characterized by a Borel function $\sigma:\Theta\times\{W,B\}\to [0,1]$, with the interpretation that an arriving agent of type $\theta$ uses the machine with probability $\sigma(\theta,W)$ if it is working and fixes it with probability $\sigma(\theta,B)$ if it is broken, with independent randomization across agents.

As in our analysis of the fluid interpretation, we explicitly impose only a few necessary conditions on admissible system processes.  These conditions will imply the physical constraints and ensure implementability using Markovian mechanisms.

\begin{definition}~\label{as:poisson}
A system process is \emph{admissible} if:
\begin{enumerate}\item Conditional on the history of the system process before time $0$, agent arrivals after time 0 follow a Poisson process with arrival rate $1$ and arriving agent types are i.i.d.\ with distribution $\lambda$.
\item Machine lifespans are i.i.d.\ with expectation $1/\rho$.
\item Arriving agents use the machine only when it is working and contribute only when it is broken.
\end{enumerate}\end{definition}
The first requirement ensures ensures that the agent arrival process is exogenous, and in particular that it is independent of the machine state. This assumption can also be viewed as a nonanticipation constraint on the designer, who does not know the future arrival process when deciding how to direct an arriving agent. The second requirement ensures that the system process encodes the physical breakage process described above. The final requirement ensures that agents can use and fix the machine only when it is working or broken, respectively.

The following result is an analog of Proposition \ref{pr:physfluid} for the sequential arrival environment.  It establishes that every admissible system process induces a reduced-form satisfying the physical constraints.  Further, all triples $(R, P, Q)$ satisfying the physical constraints can be achieved via some Markovian mechanism.  As a result, the physical constraints are necessary and sufficient for feasibility of an allocation.

\begin{proposition}\label{pr:phys}\begin{enumerate}
\item Fix a stationary process that satisfies Assumption~\ref{as:poisson}. The reduced form $(R, P, Q)$ induced by this process satisfy \eqref{eq:balanceConstraint} and \eqref{eq:simplexConstraint} for all $\theta \in \Theta.$

\item Suppose that the triple $(R, P, Q)$ satisfies \eqref{eq:balanceConstraint} and \eqref{eq:simplexConstraint} for all $\theta \in \Theta.$  Then there exists a Markovian mechanism that induces the reduced-form $(R, P, Q)$.
\end{enumerate}
\end{proposition}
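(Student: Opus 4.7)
The plan is to handle the two directions separately. For part 1 (necessity), the simplex bounds follow from the PASTA property. By Assumption~\ref{as:poisson}, arrivals form a Poisson process with rate $1$ and types are i.i.d.\ independent of the prior history, hence of $\omega_t$. Consequently, an arriving agent finds $\omega_t=W$ with stationary probability $Q$, independent of its type. For any Borel $B\subset\Theta$, Assumption~\ref{as:poisson}(3) thus bounds the arrival rate of agents with type in $B$ who use the machine by $Q\cdot\lambda(B)$, giving $R(\theta)\le Q$ at the density level. The inequality $P(\theta)\le 1-Q$ follows analogously.

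For the balance condition, I would equate the long-run rates of W$\to$B and B$\to$W transitions. Since successive lifespans are i.i.d.\ with mean $1/\rho$ by Assumption~\ref{as:poisson}(2), the renewal theorem gives a W$\to$B transition rate of $\rho\cdot Q$ per unit time. By the reduced-form definition of $P$, the B$\to$W transition rate is $\int P\,\td\lambda$, since each contribution fixes the machine. Stationarity forces equality, yielding \eqref{eq:balanceConstraint}.

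For part 2 (sufficiency), assume $0<Q<1$ (boundary cases are trivial). I would define the Markovian mechanism $\sigma(\theta,W)=R(\theta)/Q$ and $\sigma(\theta,B)=P(\theta)/(1-Q)$, which both take values in $[0,1]$ by the simplex bounds. Working periods are drawn from the exogenous lifespan distribution (mean $1/\rho$). A broken period ends at the first contributing arrival; since the per-arrival contribution probability is $\int\sigma(\theta,B)\,\lambda(\td\theta)=\int P\,\td\lambda/(1-Q)=\rho Q/(1-Q)$ by the balance condition, the mean broken period is $(1-Q)/(\rho Q)$. The stationary fraction of time in state $W$ is then $(1/\rho)/\bigl(1/\rho+(1-Q)/(\rho Q)\bigr)=Q$, as desired. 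PASTA now recovers the full reduced form: an arriving agent sees $\omega_t\in\{W,B\}$ with stationary probabilities and acts according to $\sigma$, so the density of type-in-$B$ usage events equals $Q\cdot\sigma(\theta,W)=R(\theta)$, and analogously for $P$. The constructed process satisfies all three parts of Assumption~\ref{as:poisson} by design.

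The main technical obstacle I anticipate is rigorously establishing existence of the stationary distribution for the constructed alternating process, together with the associated PASTA identity at arrival epochs, since the lifespan distribution is not assumed exponential. These are standard consequences of regenerative-process theory but require careful bookkeeping of the joint state (machine status together with the age of the current W or B regime), rather than a one-line invocation of Markov-chain arguments.
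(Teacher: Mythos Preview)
Your proposal is correct and follows essentially the same approach as the paper: for necessity, the paper likewise invokes PASTA for the simplex bounds and a renewal-theoretic rate-balance argument for \eqref{eq:balanceConstraint} (deferring to the analogous fluid-model computation); for sufficiency, the paper constructs the identical Markovian mechanism $\sigma(\theta,W)=R(\theta)/Q$, $\sigma(\theta,B)=P(\theta)/(1-Q)$ and verifies the stationary uptime via the alternating-renewal ratio, exactly as you do. Your explicit flagging of the regenerative-process technicality is more careful than the paper, which simply asserts the alternating-renewal conclusion.
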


\subsection{Connection to incentivized bandit exploration}\label{subsec:poissonApplication}
The sequential interpretation described in Section \ref{subsec:poissonMicrofoundation} can be used to bridge our reduced-form setting and existing models of recommender system design, in particular the setting of KMP.  These models typically employ an incentivized bandit exploration framework, in which a sequence of short-run agents are each recommended one of several bandit arms to pull, with the designer observing the outcome of each pull.  
In this section, we describe a model of bandit exploration which can be directly compared to the setting of KMP and then explain how to reduce it to our collective upkeep model. 

Consider an environment with a restless bandit arm, which each arriving agent can choose to pull or not.  The payoff to pulling the arm is controlled by a state variable $\omega \in \Omega,$ which renews at random times following a Poisson process with rate $\rho$ (whether or not the arm is pulled). When renewal occurs, a new state is drawn from a probability distribution $\pi$ over states, independent of the history of the system.  The designer does not directly observe the state process.  However, whenever an agent pulls the arm, its current state is revealed to the designer.  Additionally, the designer, but not the agents, observes when renewals occur.

An agent of type $\theta$ who pulls the arm in state $\omega \in \Omega$ obtains a type- and state-dependent payoff $v(\omega,\theta)$, where $v\in L^1(\pi\times\lambda)$.  If the agent does not pull the arm, he receives a payoff of 0.  
When an agent arrives, the designer may offer a recommendation to pull or avoid the arm, which the agent is free to follow or ignore.

This setting is a variant of KMP, with three key distinctions. First, the arm's state renews periodically, making it a restless arm. Second, the agents may have heterogeneous preferences. Third, while in KMP calendar time is informative about the designer's knowledge of the arm and is tracked by agents, in our model agents view themselves as arriving in a steady state and calendar time is uninformative. 

We now explain how to reduce this incentivized bandit problem to our collective upkeep model. The machine is a metaphor for the designer's knowledge of the arm's state. It is working when the designer knows the arm's state. Using a working machine means that an agent receives an optimal recommendation tailored to his type and the state of the machine. 

A working machine breaks down when the state of the arm renews. To fix a broken machine, some agent must pull the arm to reveal its state to the designer. Therefore, upkeep of the machine takes the form of periodic exploration of the bandit arm.

Let $\uw,\cst:\Theta\to\bbr$ be given by
\[\uw(\theta) = \int \left(v(\omega, \theta)\right)^+ \, \pi(\td \omega), \quad
\cst(\theta)= -\int v(\omega, \theta) \, \pi(\td \omega).\]
The quantity $u(\theta)$ is the agent's ex-interim payoff from observing the arm's state and making an optimal decision to pull it or not. It is precisely the expected utility she obtains from using a working machine. Meanwhile, the quantity $-c(\theta)$ is the agent's ex-interim payoff from pulling the arm when its state is unknown.  In other words,  the expected cost she incurs fixing a broken machine is $c(\theta)$

In this interpretation, agents cannot distinguish between usage and contribution, since both involve pulling the arm.  As a result, the ex-interim participation constraint of our reduced-form setting is equivalent to an ex-post obedience requirement that each agent must prefer to obey a recommendation by the designer to pull the arm.  

KMP observe that when homogeneous agents do not know their order of arrival in a finite-horizon model, obedience constraints are nonbinding for a utilitarian designer.  In our setting, agents arrive in steady state and calendar time is uninformative about the state of the arm and the designer's knowledge.  A similar conclusion therefore obtains: The first-best outcome is achievable when agents are homogeneous.  However, when agents have heterogeneous preferences, incentive constraints may bind for some agents. Our results characterize how the informativeness of recommendations to different agents should be adjusted to restore obedience.

\section{Proofs}\label{se:proofs}
\subsection{Proof of Proposition~\ref{pr:fb}}
We first observe that the designer's problem is equivalent to the relaxed problem in which the balance condition is enforced as the inequality constraint
\[\rho \cdot Q \leq \int P \, \td\lambda.\]
Indeed, any mechanism which satisfies this inequality strictly can be improved by lowering $P(\theta)$ for a positive measure of types without violating the simplex bounds or the relaxed balance condition.  Hence any optimum of the relaxed problem satisfies the balance condition as an equality.  For the remaining of the proof, we pass to the relaxed problem. 

Let $C^S$ be the set of all $(P, Q)$ that satisfy the simplex bounds.  For each $Q,$ let $\bar{C}^S(Q) = \{P \ : \ (P, Q) \in C^S\}.$ Consider the Lagrangian relaxation of the designer's problem with respect to the balance condition. Let $y$ be the dual variable. The Lagrangian is given by~\eqref{lagrangefb} with the domain $C^S\times \bbr_+$.
The problem \[\text{maximize}_{P \in \bar{C}^S(Q)} \scrl(Q,P;y)\] factorizes over types and has the solution $P(\theta)=(1-Q) \cdot \one_{[c(\theta)<y]}$. Let 
\[\ell^\fb(Q;y)=\max_{P \in \bar{C}^S(Q)} \scrl(Q,P;y)=Q \cdot  (\bar u - \rho \cdot y)+(1-Q)\int (c-y)^+~ \td \lambda\] 
be the reduced Lagrangian with domain $[0,1]\times \bbr_+$.
Since the domain $Q\in [0,1]$ is compact, it follows from Sion's minimax theorem that the primal problem admits an optimal solution and strong duality holds.
The dual Lagrange function is given by
\begin{equation}\label{dualfb}\varphi^{\fb}(y)= \max_{Q\in [0,1]} \ell^\fb(Q;y)=
\max\left(\bar u - \rho \cdot y,\int (y-c(\theta))^+\lambda(\td \theta)\right).\end{equation}
Since the function $y\mapsto\bar u-\rho \cdot y$ is decreasing while the function $y\mapsto \int (y-c)^+\, \td\lambda$ is nondecreasing, the dual problem $\min_{y \geq 0}\varphi^{\fb}(y)$ attains its optimum at $y^\fb$, and its value is $W^\fb$  as stated. Finally, the balance condition  implies~\eqref{qfb}.

\subsection{Proof of Theorem~\ref{th:obedience}}
As in the proof of Proposition \ref{pr:fb}, we pass to the relaxed problem in which the balance condition is enforced as an inequality.  Let $C^{SP}$ be the set of all $(P, Q)$ that satisfy the simplex bounds and participation constraints ~\eqref{eq:participationConstraintModified}.  For each $Q,$ let $\bar{C}^{SP}(Q) = \{P \ : \ (P, Q) \in C^{SP}\}.$ Consider the Lagrangian relaxation of the designer's problem with respect to the balance constraint. Let $y$ be the dual variable. The Lagrangian is given by \eqref{lagrangefb} with domain restricted to $(P, Q)\in C^{SP}$. 

For every $(Q,y)$, the problem \[\text{maximize}_{P \in \bar{C}^{SP}(Q)} \scrl(Q,P;y)\] factorizes over types and has the solution $\pi(\theta;Q,y)=\min(1 - Q, Q \cdot\nu(\theta)) \cdot \one_{[\theta>y]}$. The function $\ell(Q,y)$ in the theorem statement is therefore the reduced Lagrangian:
\[\ell(Q; y)=\max_{P \in \bar{C}^{SP}(Q)} \mathscr{L}(Q, P; y).\]
Since the domain $Q\in [0,1]$ is compact, Sion's minimax theorem implies that strong duality holds, the primal problem
\[\max_{Q \in [0, 1]} \min_{y \geq 0} \ell(Q; y)\]
admits an optimum, and the optimal value is given by $W^\ob$, as stated. If, further, the dual problem admits a (finite) solution $y^\ast$, then strong duality ensures that the mechanism $(P^\ob, Q^\ob)$ with $P^\ob(\theta) = \pi(\theta; Q^\ob ,y^\ob)$ solves the designer's problem.

Suppose instead that the dual problem admits no optimum, i.e., $y^* = \infty$. We claim that $\rho \cdot Q \ge \int_\theta \min(1 - Q, Q \cdot\nu(\theta))\, \lambda(\td\theta)$ for every $Q$. Indeed, if there existed some $Q\in [0,1]$ such that $\rho \cdot Q < \int_\theta \min(1 - Q, Q \cdot \nu(\theta)) \, \lambda(\td\theta)$, then there would exist a feasible mechanism satisfying Slater's condition.  In that case, the dual problem would admit an optimum \parencite[Theorem 3.2.8]{borweinConvexity}, a contradiction.  
This claim implies that any $(P, Q)\in C^{SP}$ satisfying the balance condition must satisfy $P(\theta)=\min(1 - Q, Q \cdot \nu(\theta))$ for every $\theta$. Hence $(P^\ob, Q^\ob)$ with $P^\ob(\theta) = \pi(\theta; Q^\ob ,y^\ob)= \min(1 - Q, Q \cdot \nu(\theta))$ must solve the designer's problem.

It remains to prove that $y^\ob\ge y^\fb$.  If $y^\ob = \infty$ then this inequality trivially holds as a strict inequality, so suppose that $y^\ob < \infty.$ Let $\varphi^\fb$ and $\varphi^\ob$ be the dual Lagrange functions of the two problems, with $\varphi^\fb$ as defined in \eqref{dualfb} and \[\varphi^{\ob}(y) = \max_{Q \in [0, 1]} \ell(Q; y).\] Then $\varphi^\fb(y)= \varphi^\ob(y)$ for every $y\le y^\fb$,  because for $y\le y^\fb$ the maximum in~\eqref{dualfb} is achieved by $(P, Q) = (0, 1)$, which satisfies the participation constraints. Therefore, for every $y<y^\fb$ we have that
\[\varphi^\ob(y) = \varphi^\fb(y) > \varphi^\fb(y^\fb) = \varphi^\ob(y^\fb),\]
where the strict inequality follows from the fact that $\varphi^\fb$ is uniquely maximized at $y^\fb$.
Since $y^\ob$ is the solution to the dual problem $\min_{y \geq 0} \varphi^\ob(y)$, it follows that $y^\ob\ge y^\fb$ as desired.  Finally, if $W^\ob < W^\fb$ and $y^\ob < \infty,$ then $\varphi^\ob(y^\ob)<\varphi^\fb(y^\fb)$. Since $\varphi^\ob(y^\fb)=\varphi^\fb(y^\fb)$, this implies that $y^\fb\neq y^\ob$, as desired.

\subsection{Proof of Proposition \ref{prop:orderedTypesParticipation}}
Let $(Q^\ob, P^\ob)$ be the mechanism characterized in Theorem \ref{th:obedience}, with $y^\ob$ the corresponding solution to the dual problem.  Let $\Theta^\ob$ be the set of types with active participation constraints under this mechanism. 
By definition, $\theta\in\Theta^\ast$ if and only if $P^\ob(\theta) = Q^\ob \cdot \nu(\theta).$ If $Q^\ob = 0,$ then $\Theta^\ast=\bfr_-$ is an interval. Going forward, we assume that $Q^\ob > 0.$

Theorem \ref{th:obedience} implies that $\theta \in \Theta^\ob$ if and only if $c(\theta) < y^\ast$ and $\nu (\theta) \leq (1 - Q^\ob)/Q^\ob$, i.e., $\Theta^\ast$ is the intersection of $I_r=c^{-1}([0, y^\ast))$  and $I_l=\nu^{-1}([0,(1 - Q^\ob)/Q^\ob])$.  Since $c$ and $\nu$ are monotone functions, $I_r$ and $I_l$ are both intervals. Therefore, $\Theta^\ast$ is itself an interval.

Let $\Theta^\fb$ be the set of types that contribute under the mechanism characterized in Proposition \ref{pr:fb}. We claim that $\Theta^\fb\cap \Theta^\ast\neq\emptyset$. Indeed, in any convex optimization problem, all inactive constraints may be removed without affecting the optimality of the solution.  But first-best welfare is achievable in the problem without the participation constraints in $\Theta^\fb$, and so $\Theta^\fb\cap \Theta^\ast = \emptyset$ would contradict our hypothesis. 

Recall that $c(\theta)=-\theta$ under ordered types , so that $\Theta^\fb=(\theta^\fb,0)$ with $\theta^\fb=-y^\fb$. We claim that $\theta^\fb\in \Theta^\ast$. Indeed,  from Theorem \ref{th:obedience} we know that $y^\ob > y^\fb$ and therefore $y^\fb \in [0, y^\ob)$ and $\theta^\fb=-y^\fb \in I_r$. Meanwhile, $\Theta^\fb\cap \Theta^\ast\neq\emptyset$ implies that $\Theta^\fb\cap I_l\neq\emptyset$. Since $\Theta^\fb$ is an interval with lower endpoint $\theta^\fb$ and $I_l$ is a left half-line, it follows that $\theta^\fb\in I_l$.

\subsection{Proof of Proposition \ref{prop:compStatYP}}
Let $(Q^\ob, y^\ob)$ be as defined in Theorem \ref{th:obedience}.  We first prove that if $\mathbb{E}_\lambda \nu = \infty,$ then $Q^\ob \in (0, 1)$ and $y^\ob < \infty,$ so that in particular $Q^\ob \in \argmax_{Q \in (0, 1)} \ell(Q; y^\ob).$  The balance condition implies the inequality $Q^\ob/(1 - Q^\ob) \leq 1/\rho$, so that necessarily $Q^\ob < 1.$  Next, the monotone convergence theorem implies that
\[\lim_{Q \downarrow 0} \int \min(\nu(\theta), (1 - Q)/Q) \, \lambda(\td \theta) = \mathbb{E}_\lambda \nu = \infty.\]
Hence for $Q > 0$ sufficiently small, the mechanism $(Q, P)$ with $P(\theta) = \min(1 - Q, Q \cdot \nu(\theta))$ lies in $C^\ob$ and satisfies the balance condition as a strict inequality, verifying Slater's condition and therefore (as established in the proof of Theorem \ref{th:obedience}) implying that $y^\ob < \infty$.  

By construction, this mechanism satisfies participation for all agents and therefore delivers non-negative aggregate welfare.  Lowering $P(\theta)$ for a positive measure of types until the balance condition holds with equality therefore yields a feasible mechanism generating positive payoffs for a positive measure of types, i.e., generating positive aggregate welfare.  Since any feasible mechanism satisfying $Q = 0$ yields zero welfare, it must be that $Q^\ob > 0.$

We now establish the claimed limits.  Since the reduced Lagrangian $\ell$ is a concave function of $Q,$ its one-sided derivatives exist everywhere.  Its right derivative may be calculated by writing $\ell$ as
\begin{align*}
\ell(Q; y) = \ & Q \cdot (\bar{u} - \rho \cdot y) + Q \cdot \int \nu(\theta)  \cdot (y - c(\theta))^+ \, \lambda(\td\theta) \\
& + \int \one_{[\nu \geq (1 - Q)/Q]}(\theta) \cdot \Big(1 - (\nu(\theta) + 1) \cdot Q\Big) \cdot (y - c(\theta))^+ \, \lambda(\td\theta),
\end{align*} 
Since $(1 - Q)/Q$ is decreasing and continuous,
 \[\frac{d}{d_+Q}\one_{[\nu \geq (1 - Q)/Q]}(\theta) = 0\]
 everywhere, meaning that
 \begin{align*}
\frac{\partial \ell}{\partial_+ Q} = \ & \bar{u} - \rho \cdot y + \int \nu(\theta)  \cdot (y - c(\theta))^+ \, \lambda(\td\theta) \\
\ & - \int \one_{[\nu \geq (1 - Q)/Q]}(\theta) \cdot (\nu(\theta) + 1) \cdot (y - c(\theta))^+ \, \lambda(\td\theta).
\end{align*}
Regrouping terms allows this derivative to be equivalently written
\begin{align*}
\frac{\partial \ell}{\partial_+ Q} = \ & \bar{u} - \int \one_{[c \leq y]}(\theta) \cdot\Big(\one_{[\nu < (1 - Q)/Q]}(\theta) \cdot u(\theta) - \one_{[\nu \geq (1 - Q)/Q]}(\theta) \cdot c(\theta)\Big) \, \lambda(\td\theta) \\
\ & - y \cdot \left(\rho + \text{Pr}_\lambda(c \leq y \land \nu \geq (1 - Q)/Q) - \int  \one_{[c \leq y \land \nu < (1 - Q)/Q]}(\theta) \cdot \nu(\theta) \, \lambda(\td\theta) \right).
\end{align*}
The optimality condition $Q^\ob \in \argmax_{Q \in (0, 1)} \ell(Q; y^\ob)$ implies the one-sided first-order condition $\frac{\partial \ell}{\partial_+ Q}(Q^\ob; y^\ob) \leq 0.$  Meanwhile, for every $Q > 0$ the balance condition may be rewritten
\[\rho = \int  \one_{[c \leq y \land \nu < (1 - Q)/Q]}(\theta) \cdot  \nu(\theta) \, \lambda(\td\theta) + \frac{1 - Q}{Q} \cdot \Pr(c \leq y \land \nu \geq (1 - Q)/Q).\]
Since the balance condition is satisfied at $(Q^\ob, y^\ob)$ and $Q^\ob > 0,$ the first-order condition implies
\begin{align*}
&  \bar{u} - \int \one_{[c \leq y^\ob]}(\theta) \cdot\Big(\one_{[\nu < \nu^\ob]}(\theta) \cdot u(\theta) - \one_{[\nu \geq \nu^\ob]}(\theta) \cdot c(\theta)\Big) \, \lambda(\td\theta)\\
\leq \ & y^\ob \cdot (\nu^\ob + 1) \cdot \text{Pr}_\lambda(c \leq y^\ob \land \nu \geq \nu^\ob),
\end{align*}
where $\nu^\ob = (1 - Q^\ob)/Q^\ob.$  Since $c > 0,$ this bound in turn implies the weaker bound
\[\int \one_{[c > y^\ob]}(\theta) \cdot u(\theta) \, \lambda(\td\theta) \leq y^\ob \cdot (\nu^\ob + 1) \cdot \text{Pr}_\lambda(\nu \geq \nu^\ob).\]

Now, suppose that $\lim_{v \rightarrow \infty} v \Pr_\lambda(\nu > v) = 0.$  Since $\lim_{\rho \rightarrow \infty} Q^\ob = 0,$ it must be that $\lim_{\rho \rightarrow \infty} \nu^\ob = \infty$ and therefore $\lim_{\rho \rightarrow \infty}  (\nu^\ob + 1) \cdot \text{Pr}_\lambda(\nu \geq \nu^\ob) = 0.$  Let $\underline{y} = \liminf_{\rho \rightarrow \infty} y^\ob ,$ and suppose by way of contradiction that $\underline{y} < \infty.$  Then
\[\liminf_{\rho \rightarrow \infty} y^\ob \cdot (\nu^\ob + 1) \cdot \text{Pr}_\lambda(\nu \geq \nu^\ob) = 0.\]
Meanwhile, viewed as a function of $y$, $\one_{[c > y]}(\theta)$ is an indicator function on an open set, meaning that it is lower semicontinuous.  Hence 
\[\liminf_{\rho \rightarrow \infty} \one_{[c > y^\ob]} \geq \one_{[c > \underline{y}]}.\]
Fatou's lemma therefore implies
\[\liminf_{\rho \rightarrow \infty} \int \one_{[c > y^\ob]}(\theta) \cdot u(\theta) \, \lambda(\td\theta) \geq \int \one_{[c > \underline{y}]}(\theta) \cdot u(\theta) \, \lambda(\td\theta) \]
Since $c$ has full support on $\bfr_+,$ the latter integral must be positive when $\underline{y} < \infty,$ a contradiction.

Meanwhile, the left derivative of $\ell$ may be calculated by writing the reduced Lagrangian as
\begin{align*}
\ell(Q; y) = \ & Q \cdot (\bar{u} - \rho \cdot y) + \int \one_{[\nu \leq (1 - Q)/Q]}(\theta) \cdot \Big((\nu(\theta) + 1) \cdot Q - 1\Big)  \cdot (y - c(\theta))^+ \, \lambda(\td\theta) \\
& + (1 - Q) \int (y - c(\theta))^+ \, \lambda(\td\theta).
\end{align*}
Since $(1 - Q)/Q$ is continuous and decreasing, 
\[\frac{d}{d_-Q}\one_{[\nu \leq (1 - Q)/Q]}(\theta) = 0\]
everywhere, meaning that
\begin{align*}
\frac{\partial \ell}{\partial_- Q} = \ & \bar{u} - \rho \cdot y + \int \one_{[\nu \leq (1 - Q)/Q]}(\theta) \cdot (\nu(\theta) + 1)  \cdot (y - c(\theta))^+ \, \lambda(\td\theta) \\
\ & - \int (y - c(\theta))^+ \, \lambda(\td\theta).
\end{align*}
This derivative may be equivalently written
\begin{align*}
\frac{\partial \ell}{\partial_- Q} = \ & \bar{u} - \int \one_{[c < y]}(\theta) \cdot \Big(\one_{[\nu \leq (1 - Q)/Q]}(\theta) \cdot u(\theta) - \one_{[\nu > (1 - Q)/Q]}(\theta) \cdot c(\theta)\Big) \, \lambda(\td\theta) \\
\ & - y \cdot \left(\rho + \text{Pr}_\lambda(c < y \land \nu > (1 - Q)/Q) - \int  \one_{[c < y \land \nu \leq (1 - Q)/Q]} \cdot \nu(\theta) \, \lambda(\td\theta) \right).
\end{align*}
The optimality condition $Q^\ob \in \argmax_{Q \in (0, 1)} \ell(Q; y^\ob)$ implies the one-sided first-order condition $\frac{\partial \ell}{\partial_- Q}(Q^\ob; y^\ob) \geq 0.$  Meanwhile, for every $Q > 0$ the balance condition may be rewritten
\[\rho = \int  \one_{[c < y \land \nu \leq (1 - Q)/Q]} \cdot \nu(\theta) \, \lambda(\td\theta) + \frac{1 - Q}{Q} \cdot \Pr(c < y \land \nu > (1 - Q)/Q).\]
Since the balance condition is satisfied at $(Q^\ob, y^\ob)$ and $Q^\ob > 0,$ the first-order condition implies
\begin{align*}
&  \bar{u} - \int \one_{[c < y^\ob]}(\theta) \cdot \Big(\one_{[\nu \leq \nu^\ob]}(\theta) \cdot u(\theta) - \one_{[\nu > \nu^\ob]}(\theta) \cdot c(\theta)\Big) \, \lambda(\td\theta)\\
\geq \ & y^\ob \cdot (\nu^\ob + 1) \cdot \text{Pr}_\lambda(c < y^\ob \land \nu > \nu^\ob).
\end{align*}
Let $\bar{c} = \int c \, \td\lambda.$  Then the left-hand side of this inequality is in turn bounded above by $\bar{u} + \bar{c},$  so that
\[\bar{u} + \bar{c} \geq y^\ob \cdot (\nu^\ob + 1) \cdot \text{Pr}_\lambda(c < y^\ob \land \nu > \nu^\ob).\]
Note that $\bar{u} + \bar{c}$ is independent of $Q^\ob$, $y^\ob$, and $\rho.$

Now, suppose that $\nu$ and $c$ are negatively associated under $\lambda$ and $\lim_{v \rightarrow \infty} v \Pr_\lambda(\nu > v) = \infty$.  The first condition implies the inequality
\[\text{Pr}_\lambda(c < y^\ob \land \nu > \nu^\ob) \geq \text{Pr}_\lambda(c < y^\ob) \cdot \text{Pr}_\lambda(\nu > \nu^\ob),\]
yielding the bound
\[\bar{u} + \bar{c} \geq y^\ob \cdot \text{Pr}_\lambda(c < y^\ob) \cdot (\nu^\ob + 1) \cdot \text{Pr}_\lambda(\nu > \nu^\ob).\]
Since $\lim_{\rho \rightarrow \infty} Q^\ob = 0,$ it must be that $\lim_{\rho \rightarrow \infty} \nu^\ob = \infty.$  Hence $\lim_{v \rightarrow \infty} v \Pr_\lambda(\nu > v) = \infty$ implies that $\lim_{\rho \rightarrow \infty}  (\nu^\ob + 1) \cdot \text{Pr}_\lambda(\nu > \nu^\ob) = \infty.$  Suppose by way of contradiction that $\limsup_{\rho \rightarrow \infty} y^\ob > 0.$  Since $c$ has full support on $\bfr_+,$ the expression $y \cdot \Pr_\lambda(c < y)$ is an increasing function of $y$ which is positive for all $y > 0$.  Hence $\limsup_{\rho \rightarrow \infty} y^\ob \cdot \Pr_\lambda(c < y^\ob) > 0.$
It follows that
\[\limsup_{\rho \rightarrow \infty} y^\ob \cdot \text{Pr}_\lambda(c < y^\ob) \cdot (\nu^\ob + 1) \cdot \text{Pr}_\lambda(\nu > \nu^\ob) = \infty.\]
But this limit must also be bounded above by $\bar{u} + \underline{u} ,$ which is finite given that $u, c \in \mathscr{L}^1(\lambda),$ a contradiction.

\subsection{Proof of Proposition \ref{prop:compStatYPOrdered}}
Let $f = \lim_{\varepsilon \rightarrow 0} Pr_\lambda(c < \varepsilon)/\varepsilon$, and assume that this limit exists.  Also let $u_+(0) = \limsup_{t \rightarrow 0} u(t)$ and  $u_-(0) = \liminf_{t \rightarrow 0} u(t)$.
\begin{lemma}
If $u_-(0) > 0$ and $u_+(0) < \infty,$ then $\limsup_{v \rightarrow \infty} v \Pr_\lambda(\nu > v) \leq u_+(0) \cdot f$ and $\liminf_{v \rightarrow \infty} v \Pr_\lambda(\nu > v) \geq u_-(0) \cdot f$.
\end{lemma}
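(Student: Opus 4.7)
The approach is to translate the tail event $\{\nu > v\}$ into a tail event of $c$ using the ordered-types structure together with the pointwise behavior of $u$ near zero. Under ordered types, $c(\theta) = -\theta$ and $\nu$ is increasing in $\theta$, so the event $\{\nu > v\}$ is an interval of the form $(\theta_v, 0)$. Since $u_-(0) > 0$, $u$ is bounded below by a positive constant in some neighborhood of the origin, which combined with $c(\theta) \to 0$ as $\theta \to 0^-$ forces $\nu(\theta) \to \infty$ and hence $\theta_v \to 0^-$ as $v \to \infty$.

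Fix $\epsilon \in (0, u_-(0))$. The definitions of $u_\pm(0)$ furnish $\delta > 0$ such that
\[
u_-(0) - \epsilon < u(\theta) < u_+(0) + \epsilon \quad \text{for all } \theta \in (-\delta, 0).
\]
Once $v$ is large enough that $\theta_v \in (-\delta, 0)$ and $(u_+(0) + \epsilon)/v < \delta$, substituting these bounds into $\nu(\theta) = u(\theta)/c(\theta) > v$ on the window $(-\delta, 0)$ yields the sandwich
\[
\{c < (u_-(0) - \epsilon)/v\} \;\subseteq\; \{\nu > v\} \;\subseteq\; \{c < (u_+(0) + \epsilon)/v\}.
\]

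Multiplying through by $v$ and rewriting each side as $a \cdot \Pr_\lambda(c < a/v)/(a/v)$ with $a = u_\pm(0) \pm \epsilon$, the hypothesis $\Pr_\lambda(c < t)/t \to f$ as $t \to 0$ gives
\[
(u_-(0) - \epsilon) f \;\leq\; \liminf_{v \to \infty} v \Pr_\lambda(\nu > v) \;\leq\; \limsup_{v \to \infty} v \Pr_\lambda(\nu > v) \;\leq\; (u_+(0) + \epsilon) f.
\]
Letting $\epsilon \to 0$ delivers both stated inequalities.

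The only subtlety is that $u$ need not be continuous at zero, so the argument must work with the $\limsup$/$\liminf$ definitions of $u_\pm(0)$ rather than any pointwise limit; this is absorbed into the $\epsilon$ slack in the sandwich. A second detail is verifying $\theta_v \to 0^-$, which critically uses $u_-(0) > 0$: without it, $\nu$ could remain bounded and the tail event could be empty for large $v$, degenerating the argument.
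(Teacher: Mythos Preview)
Your proof is correct and follows essentially the same approach as the paper: both arguments sandwich $\Pr_\lambda(\nu > v)$ between probabilities of the form $\Pr_\lambda(c < \text{const}/v)$ by exploiting the identity $\nu = u/c$ together with the bounds on $u$ near zero, and then invoke the limit defining $f$. The paper routes through an auxiliary threshold $\chi(v) = -\inf\{\theta : \nu(\theta) > v\}$ and separately bounds $v\chi(v)$ via $u_\pm(0)$, whereas you fold these two steps into a single sandwich $\{c < (u_-(0)-\epsilon)/v\} \subseteq \{\nu > v\} \subseteq \{c < (u_+(0)+\epsilon)/v\}$; your organization is slightly more direct but the underlying idea is identical.
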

\begin{proof}
For any $v > 0,$ let $\chi(v) = -\inf\{\theta \ : \ \nu(\theta) > v\}$, with the convention that $\chi(v) = 0$ if $v \geq \sup_\Theta \nu(\theta).$  Note that $\chi$ is nonincreasing given that $\nu$ is increasing.  The assumption $u_-(0) > 0$ implies that $\nu(\theta) = -u(\theta)/\theta$ is unbounded above, hence  $\chi(v) \in (0, \infty)$ for sufficiently large $v > 0$ and $\lim_{v \rightarrow \infty} \chi(v) = 0$.  Going forward, we restrict attention to $v$ large enough that $\chi(v) < \infty.$

Fix $\epsilon \in (0, 1)$ small enough that $\chi((1 - \epsilon) \cdot  v) < \infty.$  For all $v,$ the bound $c < (1 - \epsilon) \cdot \chi(v)$ implies that $\nu > v,$ while $\nu > v$ implies that $c < (1 + \epsilon) \cdot \chi(v)$.  Hence
\[{\textstyle \Pr_\lambda(c < (1 - \epsilon) \cdot \chi(v)) \leq \Pr_\lambda(\nu > v) \leq \Pr_\lambda(c < (1 + \epsilon) \cdot \chi(v)).}\]
Equivalently,
\[{\textstyle (1 - \epsilon) \cdot v \cdot \chi(v)\cdot \frac{\Pr_\lambda(c < (1 - \epsilon) \cdot \chi(v))}{(1 - \epsilon) \cdot \chi(v)} \leq v \cdot \Pr_\lambda(\nu > v) \leq (1 + \epsilon) \cdot v \cdot \chi(v) \cdot \frac{\Pr_\lambda(c < (1 + \epsilon) \cdot \chi(v))}{(1 + \epsilon) \cdot \chi(v)}.}\]
Now let $v \rightarrow \infty$. Since $\chi$ is nonincreasing in $v$ and satisfies $\lim_{v \rightarrow \infty} \chi(v) = 0,$ we have
\[\lim_{v \rightarrow \infty} \frac{\Pr_\lambda(c < (1 - \epsilon) \cdot \chi(v))}{(1 - \epsilon) \cdot \chi(v)} = \lim_{v \rightarrow \infty} \frac{\Pr_\lambda(c < (1 + \epsilon) \cdot \chi(v))}{(1 + \epsilon) \cdot \chi(v)} = f.\]
Meanwhile, $\nu(-(1 + \epsilon) \cdot \chi(v)) < v < \nu(-\chi(v)/(1 + \epsilon))$ for all $v,$ and so
\[\frac{v \cdot \chi(v)}{1 + \epsilon} > \frac{\nu(-(1 + \epsilon) \cdot \chi(v)) \cdot \chi(v)}{1 + \epsilon} = \frac{u(-(1 + \epsilon) \cdot \chi(v))}{(1 + \epsilon)^2},\]
implying
\[\liminf_{v \rightarrow \infty} \frac{v \cdot \chi(v)}{1 + \epsilon} \geq \frac{u_-(0)}{(1 + \epsilon)^2}.\]
Since $u_-(0)$ is positive and finite by assumption, it follows that
\[{\textstyle \liminf_{v \rightarrow \infty} v \cdot \Pr_\lambda(\nu > v) \geq \frac{f \cdot u_-(0)}{(1 + \epsilon)^2}.}\]
A very similar calculation yields the bound
\[{\textstyle \limsup_{v \rightarrow \infty} v \cdot \Pr_\lambda(\nu > v) \leq  \frac{f \cdot u_-(0)}{(1 - \epsilon)^2}.}\]
Since these bounds hold for any $\epsilon \in (0, 1)$, taking $\epsilon \rightarrow 0$ yields the claimed bounds.
\end{proof}
Combining this result with Proposition \ref{prop:compStatYP} immediately implies the desired result in the first case.  As for the second case, recall that under ordered types, $c$ is decreasing in $\theta$ while $\nu$ is increasing, so that $\nu$ and $c$ are mechanically negatively associated under ordered types.  Proposition \ref{prop:compStatYP} therefore also implies the result for the second case. 

\subsection{Proof of Proposition~\ref{qdaggerroh}}


We prove the Proposition using the following lemma, which may be useful in other situations to establish comparative statics of saddle points. 
\begin{lemma}\label{le:saddlesubmodular}Let $X$ be a sublattice of $\bfr_{+}\times\bfr\times \bfr$ and $h:X\to \bfr$ be submodular and nonincreasing in its final argument. Then $\argmax_{x} \min_{y} x\cdot h(x,y,z)$ is nonincreasing\footnote{Monotonicity is in the induced set ordering \parencite[Section 2.4]{topkis1998supermodularity}.}.\end{lemma}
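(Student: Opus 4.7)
The plan is to eliminate the inner minimization in $y$ and apply Topkis's monotone comparative statics theorem to a reduced two-variable objective. Because $x \geq 0$, the inner problem satisfies $\min_{y} x\cdot h(x,y,z) = x \cdot \tilde h(x,z)$, where $\tilde h(x,z) := \min_{y} h(x,y,z)$ (the minimum taken over $y$ such that $(x,y,z) \in X$). The argmax in question is therefore $\argmax_x g(x,z)$ with $g(x,z) := x \cdot \tilde h(x,z)$, and it suffices to show that $g$ is submodular in $(x,z)$ on the projection of $X$ onto its first and third coordinates. Topkis's theorem then delivers the monotonicity in $z$ in the induced set order.

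First I would verify that $\tilde h$ inherits two properties from $h$. For submodularity, given $(x_1,z_1)$ and $(x_2,z_2)$ in the projection, let $y_i^\ast$ attain $\tilde h(x_i,z_i)$. The sublattice property of $X$ places the join and meet of $(x_1, y_1^\ast, z_1)$ and $(x_2, y_2^\ast, z_2)$ inside $X$, and submodularity of $h$ at these two points gives
\[\tilde h(x_1,z_1) + \tilde h(x_2,z_2) \geq h(x_1 \vee x_2, y_1^\ast \vee y_2^\ast, z_1 \vee z_2) + h(x_1 \wedge x_2, y_1^\ast \wedge y_2^\ast, z_1 \wedge z_2).\]
Bounding each term on the right by the corresponding value of $\tilde h$ yields the submodular inequality for $\tilde h$. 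For monotonicity, $\tilde h$ is nonincreasing in $z$ because the pointwise minimum in $y$ preserves monotonicity in $z$.

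The key step is to deduce submodularity of the product $g = x \cdot \tilde h$. The only nontrivial case is $x_1 \leq x_2$ with $z_1 \geq z_2$. Submodularity of $\tilde h$ rearranges to the statement that $\Delta(x) := \tilde h(x,z_2) - \tilde h(x,z_1)$ is nondecreasing in $x$, while nonincreasingness of $\tilde h$ in $z$ gives $\Delta(x) \geq 0$ for every $x$. Hence $x \cdot \Delta(x)$ is nondecreasing in $x$ on $\bfr_+$ as a product of two nonnegative nondecreasing functions, which rearranges to
\[g(x_1,z_1) + g(x_2,z_2) \geq g(x_1,z_2) + g(x_2,z_1),\]
the required submodular inequality for $g$. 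Applying Topkis's theorem to the submodular function $g$ on the relevant sublattice of $\bfr_+ \times \bfr$ then completes the proof.

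The main obstacle is the multiplication step that passes from submodularity of $\tilde h$ to submodularity of $g$; it is precisely here that all three hypotheses---submodularity of $h$, monotonicity of $h$ in $z$, and nonnegativity of $x$---enter jointly, and dropping any of them breaks the product argument.
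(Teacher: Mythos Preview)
Your proof is correct and follows the same route as the paper: factor $x\ge 0$ out of the inner minimum, show that the partial minimum $\tilde h(x,z)=\min_y h(x,y,z)$ inherits submodularity (the paper cites Topkis, Theorem~2.7.6, where you argue directly) and is nonincreasing in $z$, then prove that the product $g(x,z)=x\cdot\tilde h(x,z)$ is submodular and invoke Topkis's monotone comparative statics. The paper's verification of the product step uses the decomposition $g(x',z)-g(x,z)=(x'-x)\tilde h(x,z)+x\bigl(\tilde h(x',z)-\tilde h(x,z)\bigr)$ and checks each summand is nonincreasing in $z$, whereas you rewrite the same inequality as the monotonicity of $x\Delta(x)$; the two arguments are equivalent.
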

\begin{proof}Let $H(y, z) = \min_y h(x, y, z),$ and consider the objective function
\[{\textstyle g(x, z)=\min_{y} x\cdot h(x,y,z)=x\cdot H(x, z).}\]
Since $h$ is submodular, it follows that $H$ is as well \parencite[Theorem 2.7.6]{topkis1998supermodularity}. Since $H$ is also nonincreasing in $z$, Claim \ref{cl:super} below implies that $g$ is also submodular. Its maximizers are therefore nonincreasing in $z$.\end{proof}
\begin{claim}\label{cl:super}
Let $X$ be a sublattice of $\bfr_+\times\bfr$ and $f:X\to\bfr$ be submodular (supermodular) and nonincreasing (nondecreasing) in its second component. Then $g:X\to \bfr$ given by $g(x,y)=x \cdot f(x,y)$ is submodular (supermodular).
\end{claim}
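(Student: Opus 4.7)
The plan is to verify the submodularity inequality pointwise. Given $(x_1, y_1), (x_2, y_2) \in X$, take without loss of generality $x_1 \leq x_2$; then, since $X$ is a sublattice of $\bfr_+ \times \bfr$, the join and meet of these two points are $(x_2, y_1 \vee y_2)$ and $(x_1, y_1 \wedge y_2)$. When $y_1 \leq y_2$ the required inequality
\[ g(x_1, y_1 \wedge y_2) + g(x_2, y_1 \vee y_2) \leq g(x_1, y_1) + g(x_2, y_2) \]
is trivially an equality, so the only substantive case is $y_1 > y_2$. In that case, setting $\Delta_i := f(x_i, y_1) - f(x_i, y_2)$, the inequality rearranges to $x_2 \Delta_2 \leq x_1 \Delta_1$.

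Next I would extract two facts from the hypotheses on $f$. Applying submodularity of $f$ directly to the pair $(x_1, y_1), (x_2, y_2)$ gives $\Delta_2 \leq \Delta_1$. Meanwhile, since $f$ is nonincreasing in its second argument and $y_1 > y_2$, both $\Delta_1$ and $\Delta_2$ are nonpositive, so $0 \leq -\Delta_1 \leq -\Delta_2$. Combined with $0 \leq x_1 \leq x_2$, multiplying these two chains of nonnegative inequalities yields $x_1(-\Delta_1) \leq x_2(-\Delta_2)$, which is precisely the desired bound.

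The supermodular case reduces to the submodular one by applying the claim just proved to $-f$: if $f$ is supermodular and nondecreasing in its second argument, then $-f$ is submodular and nonincreasing, so $-g = x \cdot (-f)$ is submodular and hence $g$ is supermodular.

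I do not anticipate any serious obstacle; the argument is essentially a sign-bookkeeping exercise. The only point of genuine content is the final multiplication step, which crucially exploits the nonnegativity hypothesis $x \geq 0$ together with both structural assumptions on $f$. Neither submodularity nor monotonicity of $f$ alone would be enough: the former controls the comparison $\Delta_2 \leq \Delta_1$ but not the signs, while the latter pins down the signs but not the comparison, and it is the combination that lets the $x$-multiplication preserve the required direction.
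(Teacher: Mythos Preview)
Your proof is correct and takes essentially the same approach as the paper's: both verify the decreasing-differences form of submodularity directly, using monotonicity of $f$ in $y$ to fix the signs of the $\Delta_i$ and submodularity of $f$ to obtain $\Delta_2\le\Delta_1$. The paper's presentation is marginally more compact---it fixes $x<x'$, writes $g(x',y)-g(x,y)$ as a sum of two terms, and observes each is nonincreasing in $y$---but the content is the same.
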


\begin{proof}
The claim follows from general results about the product of supermodular functions~\parencite[Lemma 2.6.4]{topkis1998supermodularity}, but we provide a simpler direct proof here as an alternative.

We prove the submodularity case, with supermodularity following from the submodularity result applied to $-f.$  Let $x<x'$. Then 
\[g(x',y)-g(x,y)=(x'-x) \cdot f(x,y)+x \cdot (f(x',y)-f(x,y))\]
is nonincreasing in $y$ since $f$ is nonincreasing in $y$ and submodular. \end{proof} 

To complete the proof of Proposition~\ref{qdaggerroh} using Lemma~\ref{le:saddlesubmodular}, consider the designer's primal problem $\max_{Q \in [0,1]} \min_{y \geq 0} \ell(Q;y)$, with $\ell(Q;y)=Q\cdot h(Q,y,\rho)$ and   
\[h(Q,y,\rho)=\bar u - \rho \cdot y+\int \min\left((1 - Q)/Q,\nu(\theta)\right)(y-\cst(\theta))^+~\lambda(\td\theta).\]
The function $h$ is submodular, because it is a sum of components which are each a product of a nonincreasing function in one variable and a nondecreasing function in another. It is also nonincreasing in $\rho$. Monotonicity of the solution therefore follows from Lemma~\ref{le:saddlesubmodular}.

\subsection{Proof of Theorem~\ref{th:ic}}
As in the proof of Proposition \ref{pr:fb}, we pass to the relaxed problem in which the balance condition is enforced as an inequality.  Let $C^{\ic}$ be the set of all $(R, P, Q)$ that satisfy the simplex bounds along with the participation and IC constraints.  For each $Q,$ let $\bar{C}^{\ic}(Q) = \{(R,P) \ : \ (R,P, Q) \in C^{\ic}\}.$ 

Consider the Lagrangian relaxation of the designer's problem with respect to the balance constraint. Let $y$ be the dual variable. The Lagrangian is given by \eqref{lagrangefb} with domain restricted to $(P, Q)\in C^{SP}$. Let $\ell^\ic(Q,y)$ be the reduced Lagrangian:
\[\ell^\ic(Q; y)=\max_{(R,P) \in \bar{C}^{\ic}(Q)} \mathscr{L}^\ic(R,P,Q; y),\]
where 
\begin{align*}\mathscr{L}^\ic(R, P, Q; y) = \int \left(R(\theta) \cdot \nu(\theta)-P(\theta)\right) \cdot c(\theta)\, \lambda(\td\theta)+y \cdot \int P \, \td\lambda - \rho \cdot Q \cdot y.
\end{align*}

Note that the IC constraints may be rewritten 
 \[ R(\theta) \cdot \nu(\theta) -P(\theta)\ge R(\theta') \cdot \nu(\theta)-P(\theta').\]
These are the familiar IC constraints in a single-good monopoly sale problem with valuations $\nu(\theta)$, probabilities $R(\theta)$ of receiving the good, and payments $P(\theta).$  Maximizing $\mathscr{L}^\ic$ with respect to $(R, P) \in \bar{C}^{IC}(Q)$ is therefore equivalent to solving the monopoly sale problem for a seller who maximizes a weighted sum of buyer welfare and revenue, under the uniform payment bound $P(Q) \leq 1 - Q$. (The remaining simplex bounds $R(\theta)\le Q$ can be normalized to yield a quantity 1 of the good by redefining $\nu$, and so they do not impact the form of the solution.)

As we show in Appendix \ref{singlegood}, the optimal mechanism with bounded payments takes one of two forms.  One possibility is a posted price which delivers the good with the maximum probability, i.e., with $R(Q) = Q$.  This case corresponds to $\bar \nu = \infty$ in the theorem statement.  The second is a menu with two options: pay $1-Q^*$ and receive the item with the maximum probability $Q$, or pay $\hat P$ and receive the item with some probability $\hat R$.  This case corresponds to $\bar \nu < \infty$ in the theorem statement.

We next observe that, since $\ell^\ic$ is the partially-optimized Lagrangian of a convex optimization problem, standard arguments imply that it is a concave-convex function.  (Convexity in $y$ follows from the fact that $\ell^\ic$ is the Lagrangian dual of the designer's problem enforcing a fixed $Q.$  Concavity in $Q$ follows from the fact that $\ell^\ic$ is the value of the convex maximization problem with objective $\mathscr{L}^\ic$, choice variables $(R, P)$, and parameter $Q$.)  Since the domain $Q\in [0,1]$ is compact, Sion's minimax theorem therefore implies that strong duality holds and the primal problem
\[\max_{Q \in [0, 1]} \min_{y \geq 0} \ell(Q; y)\]
admits an optimum $Q^*$. If, further, the dual problem
\[\min_{y \geq 0} \max_{Q \in [0, 1]} \ell(Q; y)\]
admits a (finite) optimum $y^*$, then strong duality ensures that the mechanism $(P^*, R^*)$ that maximizes $\mathscr{L}(R, P, Q^*; y^*)$ solves the designer's problem.

Suppose instead that the dual problem admits no optimum. We claim that $\rho\cdot Q \ge \int P~\td\lambda$ for every mechanism in $C^\ic$. Indeed, if there existed some mechanism $(R, P, Q) \in C^\ic$ satisfying $\rho\cdot Q < \int P \, \td\lambda$, then Slater's condition would be satisfied and the dual problem would admit an optimum, a contradiction.  Therefore, any mechanism in $C^{IC}$ satisfying the balance condition also maximizes seller revenue among all mechanisms in $C^{IC}$. We again appeal to the characterization in Appendix \ref{singlegood} to conclude that in this case an optimal mechanism must have the structure stated in the Theorem.

\subsection{Proof of Proposition~\ref{pr:physfluid}}
Assume first that the system process is stationary and let $(R, P, Q)$ be its reduced form. We first derive the balance condition. If $Q=0$, then the machine is always broken, agents never contribute, and the balance condition is satisfied. So assume that $Q>0$. 

For $T>0$, let $n(T)$ be the number of times the machine breaks down during the interval time interval $[0,T]$. We prove the balance condition by showing that 
\[\rho \cdot Q = \int P \, \td\lambda= \lim_{T\rightarrow\infty} \frac{n(T)}{T} \quad \text{a.s.}\] 

Since $Q>0$ it must be that $\lim_{T\rightarrow\infty}n(T)=\infty$ a.s.\ Indeed, if $n(T)$ were bounded, then from some point onward the machine would never be fixed, else it would eventually break again.  But then from some later point onward the machine would remain broken forever, implying $Q=0$, a contradiction.

Let $L_0\ge 0$ be residual lifespan of the machine at time $0$ (so $L_0=0$ if the machine is broken at time $0$), and for $i\ge 1$ let $L_i$ be the lifespan of the machine after the $i$-th time it is being fixed after time $0$. For $T > 0,$ let $W(T) = \int_0^T \mathbf{1}_{[\omega_t = G]} \, \td t$.
Then for every $T > 0,$
\begin{equation}\label{sandwich} L_1 + \dots + L_{n(T)-1} \le W(T) \le L_0 + L_1 + \dots + L_{n(T)}.\end{equation}
By the strong law of large numbers, $\lim_{n\rightarrow\infty}n^{-1} \sum_{i = 1}^n L_n= 1/\rho$ a.s. Since $\lim_{T\rightarrow\infty}n(T)=\infty$ a.s., it follows from~\eqref{sandwich} that $\lim_{T\rightarrow\infty}W(T)/n(T)= 1/\rho$ a.s. Meanwhile, $\lim_{T\rightarrow\infty}W(T)/T= Q$ a.s.\ by the definition of $Q.$ Therefore $\lim_{T\rightarrow\infty}n(T)/T= Q \cdot \rho$ a.s.

Let $M_0\ge 0$ be the residual contribution quantum at time $0$ (so $M_0=0$ if the machine is working at time $0$), and for $i\ge 1$ let $M_i$ be the contribution quantum after the $i$-th time breakdown subsequent to time $0$.  For $T > 0,$ let $C(T)$ be the total amount of contribution in the time interval $[0,T]$. Then for every $T > 0,$
\begin{equation}\label{sandwichm} M_1+\dots+M_{n(T)-1} \le C(T) \le M_0 +M_1+\dots+M_{n(T)}.\end{equation}
By the strong law of large numbers, $\lim_{n\rightarrow\infty}n^{-1} \sum_{i = 1}^n M_n= 1/\rho$ a.s. Since $\lim_{T\rightarrow\infty}n(T)=\infty$ a.s., it follows from~\eqref{sandwichm} that $\lim_{T\rightarrow\infty}C(T)/n(T) = 1$ a.s. Meanwhile, $\lim_{T\rightarrow\infty}C(T)/T= \int P~\td\lambda$ a.s.\ by the definition of $P.$ Therefore $\lim_{T\rightarrow\infty}n(T)/T= \int P~\td\lambda$ a.s.

We now establish the simplex bounds. At every point in time the realized usage of each agent is bounded by the $1$ if the machine is working and by $0$ if it is broken. Therefore for the population of agents of each type, expected usage is bounded by $Q,$ the probability that the machine is working. In other words, $R(\theta) \leq Q.$  Similarly, the realized contribution of each agent is bounded by the $0$ if the machine is working and by $1$ if it is broken. Therefore for the population of agents of each type, expected contributions are bounded by $1 - Q,$ the probability that the machine is broken. In other words, $P(\theta) \leq 1 - Q.$ 

Assume now that $(R, P, Q)$ satisfies the physical constraints, and consider the Markovian mechanism given by $\sigma(\theta,W)=R(\theta)/Q$ and $\sigma(\theta,B)=P(\theta)/(1 - Q)$, with the convention that $\sigma(\theta, W) = 0$ if $Q = 0$ and $\sigma(\theta, B) = 0$ if $Q = 1$. The machine state process under this mechanism is an alternating renewal process: When the machine becomes working, it remains working for the lifespan with expectation $\mu_W=1/\rho$. When the machine becomes broken, it remains broken for a random time with expectation $\mu_B=1/\int \sigma(\theta,B)~\lambda(\td\theta)$. 

The balance condition implies that $Q/(1 - Q) = \mu_W/\mu_B,$ which may be rearranged to find that the long-run proportion of time that the machine is working is $\frac{\mu_W}{\mu_W+\mu_B}=Q$.  The quantity $Q$ must therefore also be the probability with which the machine is working under the steady state induced by the mechanism. It then follows from definition of $\sigma$ that $R(\theta) = Q \cdot \sigma(\theta, W)$ is the expected usage of an arriving agent of type $\theta$ while $P(\theta) = (1 - Q) \cdot \sigma(\theta, B)$ is the expected contribution of an arriving agent of type $\theta$.

\subsection{Proof of Proposition \ref{pr:phys}}
Assume first that the system process is stationary and let $(R, P, Q)$ be its reduced form. The proof that $(R, P, Q)$ satisfies the balance condition is very similar to the proof of Proposition~\ref{pr:physfluid} and is omitted. To prove the simplex bound, note that the non-anticipation assumption in \eqref{as:poisson} implies the so-called PASTA (``Poison arrivals see time average'') property, under which the probability that an arriving agent of type $\theta$ encounters a working machine equals the steady-state uptime $Q$ \parencite[Section 2.4]{tijms2003first}. As a result, the simplex bounds follow from the fact that the agent cannot use or fix the machine more often than that action is available.

Assume now that $(R, P, Q)$ satisfies the physical constraints, and consider the Markovian mechanism given by $\sigma(\theta,W)=R(\theta)/Q$ and $\sigma(\theta,B)=P(\theta)/(1 - Q)$, with the convention that $\sigma(\theta, W) = 0$ if $Q = 0$ and $\sigma(\theta, B) = 0$ if $Q = 1$. The machine state process under this mechanism is an alternating renewal process with the same expectations as in the proof of Proposition~\ref{pr:phys}.  Therefore, by the same argument as in that proof the reduced form induced by this mechanism is $(R, P, Q).$

\appendix

\section{Monopoly sale problem with bounded transfers}\label{singlegood}
Consider a generalized version of Myerson's single-good monopoly sale problem \parencite{myerson}.  Buyers are heterogeneous with types $\theta \in \Theta$.  They have quasilinear payoffs over allocations and payments, with valuations $\nu : \Theta \to \bfr_+$ for the good.  A mechanism is given by an allocation function $r:\Theta \to [0,1]$ and payment rule $p:\Theta \to\bfr_+$. (We rule out payments to buyers.)  The seller's objective is a weighted average of buyer payoffs and payments of the form \begin{equation}\label{selling-obj}\int \Big( r(\theta) \cdot \nu(\theta) -p(\theta)\Big) \, \mu_1(\td\theta) + \int p \, \td \mu_2\end{equation} for some finite measures $\mu_1,\mu_2$ over $\Theta$. 

A mechanism is incentive compatible (IC) if 
 \[ r(\theta) \cdot \nu(\theta) -p(\theta)\ge r(\theta') \cdot \nu(\theta) -p(\theta')\]
for every $\theta,\theta'\in \Theta$. 
It is individually rational (IR) if $r(\theta) \cdot \nu(\theta) -p(\theta) \geq 0$ for all $\theta \in \Theta$.  The seller chooses a mechanism to maximize \eqref{selling-obj} subject to IC and IR.

 We first observe that, without loss, we may restrict attention to mechanisms satisfying $(r(\theta), p(\theta)) = (r(\theta'), p(\theta'))$ whenever $\nu(\theta) = \nu(\theta').$  Indeed, under any IC mechanism, all types with the same valuation receive the same utility.  If such a mechanism does not offer all such types the same bundle, modify it to pool all types on the bundle involving the highest payment.  This modification does not affect IC or IR and raises the seller's payoff.  Given the restriction to such mechanisms, we may without loss relabel types so that the type space is the set of valuations $\bfr_+.$
 
As usual, the set of IC and IR mechanisms can be identified with the set of nondecreasing allocation functions. (Because the lowest type has valuation 0 for the good, the utility of the lowest type is not a free parameter in this environment.) We may further restrict attention to right-continuous allocation functions.  Indeed, if $r(\nu) < r(\nu+)$ for some $\nu,$ then also $p(\nu) < p(\nu+).$  Additionally, IC implies that the bundle $(r(\nu+), p(\nu+))$ must deliver that type the same utility as $(r(\nu), p(\nu)).$  Replacing that type's bundle with $(r(\nu+), p(\nu+))$ therefore preserves IC and IR, delivers that type the same utility, and raises the seller's revenue from that type.  Going forward, we assume that $r$ is both nondecreasing and right-continuous.
The payment function and buyer payoffs corresponding to each such allocation function $r$ are
\begin{equation}\label{r2pu}\begin{split}& p_r(\nu)= \int_0^\nu v \, \td r(v),\\& U_r(\nu)=\int_0^\nu r(v) \, dv.\end{split}\end{equation}

Standard theory says that the optimal mechanism is a posted price mechanism of the form $r=\one_{[\nu_1,\infty)}$ for some $\nu_1\in \bfr_+$ (the price). The following lemma states an analog of this result for the case that payments are bounded. The proof follows a similar argument as in the corresponding proof for the case of unbounded payments, as in \textcite[Section 2.5]{borgers2015introduction}, with the twist that the set of extreme points of IC mechanisms is more complicated. We also develop the argument more generally, to allow for atoms in the type distributions. 
The argument is not new, but we do not know of a reference. 
\begin{proposition}Consider the problem of maximizing objective~\eqref{selling-obj} over all IC and IR mechanisms such that $0\le p(\nu)\le 1$ for every $\nu\in\bfr_+$. The optimal allocation function is either of the form $r=\one_{[\nu_1,\infty)}$ for some $\nu_1\le 1$ or of the form $r=r_0\one_{[\nu_0,\nu_1)} + \one_{[\nu_1,\infty)}$ for some $0\le \nu_0 \le \nu_1$ and some $0 \le r_0 \le 1$ such that $r_0\nu_0 + (1-r_0) \nu_1= 1$ \end{proposition}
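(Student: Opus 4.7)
The plan is to rewrite the problem as maximizing a linear functional over a compact convex set of measures, invoke Bauer's maximum principle to reduce to extreme points, and classify those extreme points. Using~\eqref{r2pu} and Fubini, the seller's objective \eqref{selling-obj} becomes $\int \psi \, \td r$, where $\psi(v)$ is an explicit function depending only on $\mu_1,\mu_2$. The feasible set is identified with the convex set $M$ of nonnegative Borel measures $\mu$ on $[0,\infty)$ satisfying $\mu([0,\infty))\le 1$ and $\int v\,\td\mu(v)\le 1$. The first-moment bound forces $\mu([A,\infty))\le 1/A$, which makes $M$ tight and hence weak$^{*}$-compact. With a mild upper-semicontinuity adjustment to $\psi$ when $\mu_1,\mu_2$ have atoms, a maximizer exists, and Bauer's maximum principle places it at an extreme point of $M$.

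Next, I classify these extreme points. Since $M$ is carved out of the cone of nonnegative measures by two linear inequality constraints, a dimension-counting argument pins down the structure: if $\mu$ is extreme with support of size $k$, then the space of signed measures on $\supp(\mu)$ preserving all active constraints has dimension $k$ minus the number of active constraints, and this must vanish. The extreme points are therefore the zero measure; single atoms $c\delta_a$ with either $c=1$ (mass constraint active) or $ca=1$ (payment constraint active); and two-atom measures $c_0\delta_{\nu_0}+c_1\delta_{\nu_1}$ with both constraints active, i.e., $c_0+c_1=1$ and $c_0\nu_0+c_1\nu_1=1$. Translating via~\eqref{r2pu}, these correspond respectively to $r\equiv 0$, $r=c\one_{[a,\infty)}$, and $r=c_0\one_{[\nu_0,\nu_1)}+\one_{[\nu_1,\infty)}$.

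Finally, I match these with the two forms in the proposition, discarding dominated cases. The zero measure is weakly dominated by any posted price, which contributes nonnegatively to the objective and has pointwise larger $r$. Single atoms with $c=1$ and $a\le 1$ give the posted price $r=\one_{[\nu_1,\infty)}$ with $\nu_1\le 1$. Single atoms with $c<1$ and $ca=1$ are weakly dominated by the two-atom measure $(1-c)\delta_0+c\delta_a$: the added atom at $0$ contributes nothing to $p_r$ (since $v=0$ there), so both mechanisms induce the same payment function pointwise, but the two-atom mechanism has pointwise larger $r$, hence weakly larger $U_r$ via~\eqref{r2pu}, yielding a weakly larger objective for any nonnegative $\mu_1,\mu_2$. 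The two-atom extreme points already have the form $r_0\one_{[\nu_0,\nu_1)}+\one_{[\nu_1,\infty)}$ with $r_0\nu_0+(1-r_0)\nu_1=1$.

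The main obstacle is the extreme point classification: rigorously showing that with two inequality (not equality) moment constraints on measures over an unbounded domain, the extreme points of $M$ have support of size at most two, and cleanly handling the boundary cases (slack constraints, atoms at $v=0$, measurability and semicontinuity of $\psi$). The reformulation, compactness, and Bauer arguments are standard, and the dominance reductions are short direct comparisons.
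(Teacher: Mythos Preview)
Your proposal is correct and follows essentially the same route as the paper: both identify the feasible allocations with subprobability measures on $\bfr_+$ satisfying a first-moment bound, establish weak$^*$ compactness (you via tightness, the paper via Helly), reduce to extreme points using linearity and upper semicontinuity of the objective, classify extreme points as having at most two atoms (you via a dimension-count, the paper by citing \textcite{karr1983extreme}), and then discard the suboptimal single-atom extreme points by a short domination argument. The only cosmetic difference is in that last step: you dominate the atom $c\delta_a$ with $c<1$, $ca=1$ by adding a free atom at $0$, whereas the paper dominates it by raising the allocation probability to $1$ at the same posted price; both comparisons are immediate.
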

The first mechanism in the lemma is the familiar posted price with price $\nu_1$. The second mechanism can be implemented as a menu with two options: pay $r_0 \cdot \nu_0$ and receive the good with probability $r_0$, or pay $1$ and receive the good for sure.
\begin{proof}The  allocation functions  $r:\bfr_+\to [0,1]$ of IR and IC mechanisms with payments bounded by $1$ are nondecreasing and right-continuous, hence also u.s.c, and satisfy $\int_0^\infty v~\td r(v) \le 1$. We can identify each such allocation with
  a finite measure over $m$ over $\bfr_+$ such that 
  \begin{align}
&m(\bfr_+) \leq 1, \text{ and}\label{mass}\\
&\int \nu~m(\mathrm{d}\nu) \leq 1, \label{first-moment}
\end{align}
so $r$ is the cdf of $m$. 
The set $\calM$ of all these measures is convex, and, by Helly's selection theorem, compact in the weak$^\ast$ topology.

By~\eqref{r2pu}, for every buyer valuation $\nu$, the utility $U_r(\nu)$ is linear and continuous in $r$ and the payment $p_r(\nu)$ is linear and u.s.c.\  in $r$. Therefore, the objective function \eqref{selling-obj} is linear and u.s.c., which implies that the maximum is achieved at an extreme point. The set of extreme points of measures under a finite number of moment conditions is well-understood. (See, for example~\textcite{karr1983extreme}. The argument traces back to~\textcite{karlin1953geometry}.)

The extreme points of $\calM$ for which \eqref{mass} is not saturated correspond to measures with a single atom and allocation function of the form $r=r_1\one_{[\nu_1,\infty)}$ with $r_1<1$ and $r_1\nu_1 \le 1$, implementable by a mechanism that offers a price$r_1\nu_1$ in return for receiving the item with probability $r_1$. This mechanism cannot be optimal under our objective function, because both buyer utility and payments can be increased by offering the item with probability $1$ for the same price. 

The extreme points of $\calM$ for which \eqref{mass} is saturated are probability measures. If \eqref{first-moment} is not saturated, then $m$ is a probability measure with a single atom, corresponding to an allocation of the form $r=\one_{[\nu_1,\infty)}$ with and $\nu_1 \le 1$, as in the proposition. If \eqref{first-moment} is saturated, then $m$ is a probability measure with at most two atoms, corresponding to an allocation of the form 
$r=r_0\one_{[\nu_0,\nu_1)} + \one_{[\nu_1,\infty)}$ for some $0\le \nu_0 \le \nu_1$ and some $0 \le r_0 \le 1$ such that $r_0\nu_0 + (1-r_0) \nu_1= 1$, as in the proposition.
\end{proof}
Example 2 in \textcite{che2000optimal} presents an environment in which the bound on payments is binding and the optimal mechanism is a menu with two options. That example involves an environment in which the seller is uncertain of the buyer's budget as well as their valuation.  However, the solution involves a menu with payments no higher than the lowest budget, and so it must also be the optimal mechanism in the problem where all buyers have the same (low) budget.

\cleardoublepage
\phantomsection

\addcontentsline{toc}{section}{References}
\printbibliography
\newpage


\end{document}